\newtheorem{definition}{Definition}
\newtheorem{theorem}{Theorem}
\definecolor{mycolor}{rgb}{1,0.0,0.5}
\begin{document}
\title{Strategic Defense against Stealthy Link Flooding Attacks: A Signaling Game Approach}

\author{Abdullah Aydeger, \IEEEmembership{Student Member, IEEE},
        Mohammad Hossein Manshaei, \IEEEmembership{Member, IEEE},\\
        Mohammad Ashiqur Rahman, \IEEEmembership{Member, IEEE}, Kemal Akkaya, \IEEEmembership{Senior Member, IEEE}
\IEEEcompsocitemizethanks{\IEEEcompsocthanksitem The authors (respective emails: \{aayde001, mmanshae, marahman, kakkaya\}@fiu.edu) are with the Department of Electrical and Computer Engineering, Florida International University, Miami, FL 33174, USA. 

\IEEEcompsocthanksitem A. Aydeger and M. Manshaei are the co-first authors. 

\IEEEcompsocthanksitem A. Aydeger is the corresponding author.}}


\IEEEtitleabstractindextext{%

\begin{abstract}
With the increasing diversity of Distributed Denial-of-Service (DDoS) attacks, it is becoming extremely challenging to design a fully protected network. 
For instance, Stealthy Link Flooding Attack (SLFA) is a variant of DDoS attacks that strives to block access to a target area by flooding a small set of links, and it is shown that it can bypass traditional DDoS defense mechanisms. 
One potential solution to tackle such SLFAs is to apply Moving Target Defense (MTD) techniques in which network settings are dynamically changed to confuse/deceive attackers, thus making it highly expensive to launch a successful attack.
However, since MTD comes with some overhead to the network, to find the best strategy (i.e., when and/or to what extent) of applying it has been a major challenge. The strategy is significantly influenced by the attacker's behavior that is often difficult to guess. 
In this work, we address the challenge of obtaining the optimal MTD strategy that effectively mitigates SLFAs while incurs a minimal overhead. We design the problem as a signaling game considering the network defender and the attacker as players.  
A belief function is established throughout the engagement of the attacker and the defender during this SLFA campaign, which is utilized to pick the best response/action for each player. 
We analyze the game model and derive a defense mechanism based on the equilibria of the game. We evaluate the technique on a Mininet-based network environment where an attacker is performing SLFAs and a defender applies MTD based on equilibria of the game. The results show that our signaling game-based dynamic defense mechanism can provide a similar level of protection against SLFAs like the extensive MTD solution, however, causing a significantly reduced overhead.
\end{abstract}

\begin{IEEEkeywords}
Stealthy Link Flooding Attack, Crossfire Attack, Signaling Game, Moving Target Defense
\end{IEEEkeywords}}


\maketitle
\begingroup
\renewcommand{\cleardoublepage}{}
\renewcommand{\clearpage}{}

\section{Introduction}

A recent report by Netscout states that Distributed Denial-of-Service (DDoS) attacks will continue to grow \cite{netscout}.
According to the report, last year 1.35 Terabits per second (Tbps) DDoS traffic hit Github, and just after five days of that incident, 1.7 Tbps DDoS traffic launched against an unnamed US-based service provider. 
Even though there are many DDoS defense mechanisms available \cite{zargar2013survey}, they are not capable of competing with some recent types of attacks. For instance, there is a new kind of attack where the attacker does not attack the target network/servers directly \cite{studer2009coremelt,kang2013crossfire}.
This type of attack is known as Stealthy Link Flooding Attack (SLFA). 
The Crossfire attack is an exemplary kind of SLFAs, and it is primarily performed by congesting the communication links surrounding the network of the target servers by sending low-volume traffic over them from many bots in distributed locations. Since the traffic the bots send is legitimate, and they do not attack the servers directly, it is very challenging to detect such attacks using traditional mechanisms. The consequence of this attack is the blockage of external access to the servers while they are still active/running without any problem within the network. 
%


The concept of Moving Target Defense (MTD), in which the defense is done dynamically, often proactively, by introducing agility to the network behavior, is proven useful to defend against such stealthy attacks \cite{venkatesan2016moving, wang2018detecting} \cite{aydeger2016mitigating, aydeger2019moving}. 
This agility brings protection to the system by providing resistance, as it complicates the tasks of an attacker by adding inconsistency or confusion in the knowledge of the system. These features can be implemented in various ways including but not limited to changing IP addresses of network devices, the operating system of servers, and routing information, more often by leveraging the capabilities of Software Defined Networking (SDN).  
MTD, more specifically Random Route Mutation (RRM)~\cite{jafarian2013formal}, is found useful in defending a network system against SLFAs~\cite{aydeger2019moving}. 
However, RRM brings significant overhead to the network since it takes time to update the system characteristic (i.e., routing information), and this process is usually applied not only to malicious users but also to legitimate clients. This is because the defender is not aware of the type of a client. Therefore, the concern of when and for whom to change system parameters in order to minimize the cost of the defender and impact of the attacker becomes a critical issue and it should be investigated thoroughly. 

Even though there are many research works proposing defense mechanisms against SLFAs in the literature, they are mostly reactive solutions in which the attacker made some harm before the attack is mitigated~\cite{xue2018linkscope,gkounis2016interplay}. In addition to these works, there are other kinds of proactive defense mechanisms that apply detection of bots, sending forged responses to the hosts, or similar approaches~\cite{wang2017linkbait,meier2018nethide}. We argue that bot detection might not be possible considering that bots often behave similarly to legitimate hosts, and serving fake replies to hosts is not acceptable since 
that may cause letting the legitimate clients use the longer paths and having more delays.
Therefore, there is a need for proactive, dynamic solutions that can harden the system from the attacks while bringing a minimal service degradation to the hosts. To the best of our knowledge, this work is the first to introduce strategic defense against SLFAs in a proactive manner. 

In order to model such a system that achieves the best efficiency of RRM, we need to design an intelligent defense mechanism that can take into account the above concerns for running RRM. To design such a strategic defense mechanism, in this paper, we apply a signaling game~\cite{noe1988capital} to model and analyze the attack and defense actions together. The results from this analysis assist in applying RRM selectively and appropriately such that the target network can remain sufficiently protected against Crossfire attacks and the legitimate clients experience a minimal cost. In this game, one player constructs a belief about the type of its opponent. This belief is always updated by the opponent's actions. Then, the player can better decide about its optimal strategy given its belief about the type of player from which it receives messages. Essentially, we consider the attacker and the defender as the players of this signaling game. We analyze the game and compute all potential Bayesian Nash equilibria. The game results are then used to decide when and for whom to perform RRM. We name the corresponding defense mechanism as {\it Strategic RRM}. 

%


To evaluate our proposed solution, we first implement a network using Mininet, a virtual SDN testbed~\cite{mininet}, and the defense mechanism on FloodLight Controller \cite{floodlight}. Then, we run experiments with extensive RRM, in which the routing paths are periodically changed, to observe the overhead to the legitimate clients when there is not any attack in place. Next, we run the SLFA and realize the reasonable frequency for periodic RRM, which is used later to compare with {\it Strategic RRM}.
Moreover, we consider two kinds of attackers based on attack approach and capability. 
In each case, we report the number of packet losses when the defender has either no protection, periodic RRM or {\it Strategic RRM}. We show that while periodic RRM provides a significant improvement to the network defense, it introduces packet delay as well as packet losses even when there is no attacker in the system. We also show that {\it Strategic RRM} offers a similar defense performance as periodic RRM while it causes much less overhead.

In summary, our contribution in this paper is threefold: First, we model the attacker and the defender as players of a signaling game and define their actions and payoff models. Second, we solve the game and derive all possible Nash equilibria of this game.
According to the game results, we also design a strategic mechanism to defend against SLFAs. Finally, we evaluate the proposed mechanism by conducting extensive experiments considering different attack approaches and capabilities for different defense strategies. 




This paper is organized as follows. In Section~\ref{Sec:Preliminaries}, we discuss the preliminaries and relevant work. In Section~\ref{sec:game}, we present the game model. In the following section, we analyze the game and design a defense mechanism. Detailed performance evaluation of the proposed work is presented in Section~\ref{sec:eval}. Finally, we conclude the paper in Section~\ref{sec:conclusion}.
\section{Preliminaries} \label{Sec:Preliminaries}
In this section, we briefly discuss some concepts that we applied in this research.

\subsection{Moving Target Defense}
A system typically offers resources for its clients' usage. A client accesses these resources through interfaces. An interface is often considered as an access points to the system. These access points create different attack surfaces 
with various levels of capabilities and impacts on the system~\cite{manadhata2010attack}. The configurations of these attack surfaces 
usually do not change over time.
The static nature of attack surfaces helps launch attacks easily. 
Once an attacker gains the system information (IP addresses, Operating System, routing, etc.), s/he can use that knowledge to exploit the system vulnerabilities. 

MTD is a defense mechanism that provides security through adding agility/dynamicity to the system, and thus by changing the attack surface~\cite{manadhata2013game}. 
The system's dynamicity deceives the attacker and increases his/her effort to launch a successful attack~\cite{aydeger2019moving}. 
%
There are several methods to perform MTD, including updating software or network characteristics of the system. 
Route mutation-based MTD has been shown useful in defending against SLFAs~\cite{aydeger2016mitigating}.



\subsection{Software Defined Networking}

SDN proposes the separation of routing decisions from forwarding tasks~\cite{mckeown2009software}. 
%
%
An SDN-based network architecture typically considers three layers: application, control, and data. 
While the data layer (i.e., network switch) is responsible for forwarding the data from one port to another, the control layer (i.e., SDN Controller) is responsible for making the decisions on routing rules. The application layer contains network functions including but not limited to Intrusion Detection System (IDS), firewall. It provides great flexibility and efficiency for network management \cite{feamster2014road}. Thus, we leverage SDN to implement our solution. 

\subsection{Attack Model} 
\label{sec:crossfire}
 
Stealthy link flooding attacks, such as the Coremelt \cite{studer2009coremelt} and the Crossfire \cite{kang2013crossfire}, have drawn lots of attention from researchers recently. SLFAs are indirect attacks such that the attacker targets the communication links instead of targeting the server directly~\cite{ramanujan2008protecting}. 


The attack model of this research considers Crossfire attacks as the threat and the Autonomous System (AS) as the target domain. Fig.~\ref{fig:crossfire} shows a Crossfire attack scenario. 
As it can be seen from the Fig.~\ref{fig:crossfire}, there is a target area that the attacker plans to congest using the bots. The attack consists of two phases: reconnaissance and attack. In the first phase, the attacker uses bots residing at distributed locations to obtain how network communication links and routes are formed toward the target area. The attacker uses traceroute packets to obtain the information about the network topology and traffic routes. Leveraging this information, 
s/he obtains the critical communication links that carry most of the data to the target area. 
Finding these links is important as they will be used in the second phase of the attack. 
The attacker should also have the information about other servers in the network. Some of these servers are used to send traffic from bots in order to occupy the communication links. These servers are called decoy servers.
While the bots, often many in number, send regular traffic to the decoy servers to evade Intrusion Detection Systems (IDSs), the critical links suffer with congestion, causing the target area isolated from its clients.

Two different approaches are considered for an attacker: 
\begin{itemize}
    \item \textit{Stealthy Attacker}: In this model, the attacker is not in a rush to conduct the attack. S/he runs reconnaissance, and accordingly creates a target link set. Sometimes s/he stays idle in order to behave like legitimate users. After that, s/he starts the attack phase to exhaust the links' bandwidth.
    
     \item \textit{Aggressive Attacker}: In this approach, the attacker behaves more greedy and transmits more quickly to congest the communication links. Reconnaissance attack is performed, the information is collected and the attack pattern is designed within a given short time for the aggressive attacker.
\end{itemize}

Furthermore, an attacker often has a limited number of bots. Thus, two qualitative capabilities are considered: 
\begin{itemize}
    \item \textit{Decent Attacker}: The attacker has a small number of bots to launch a Crossfire attack.
    
     \item \textit{Strong Attacker}: The attacker has an extended capability (i.e., a significantly increased number of bots) compared to a decent attacker.
     
\end{itemize}


\begin{figure}[tb] 
\centering
\includegraphics[scale=0.17]{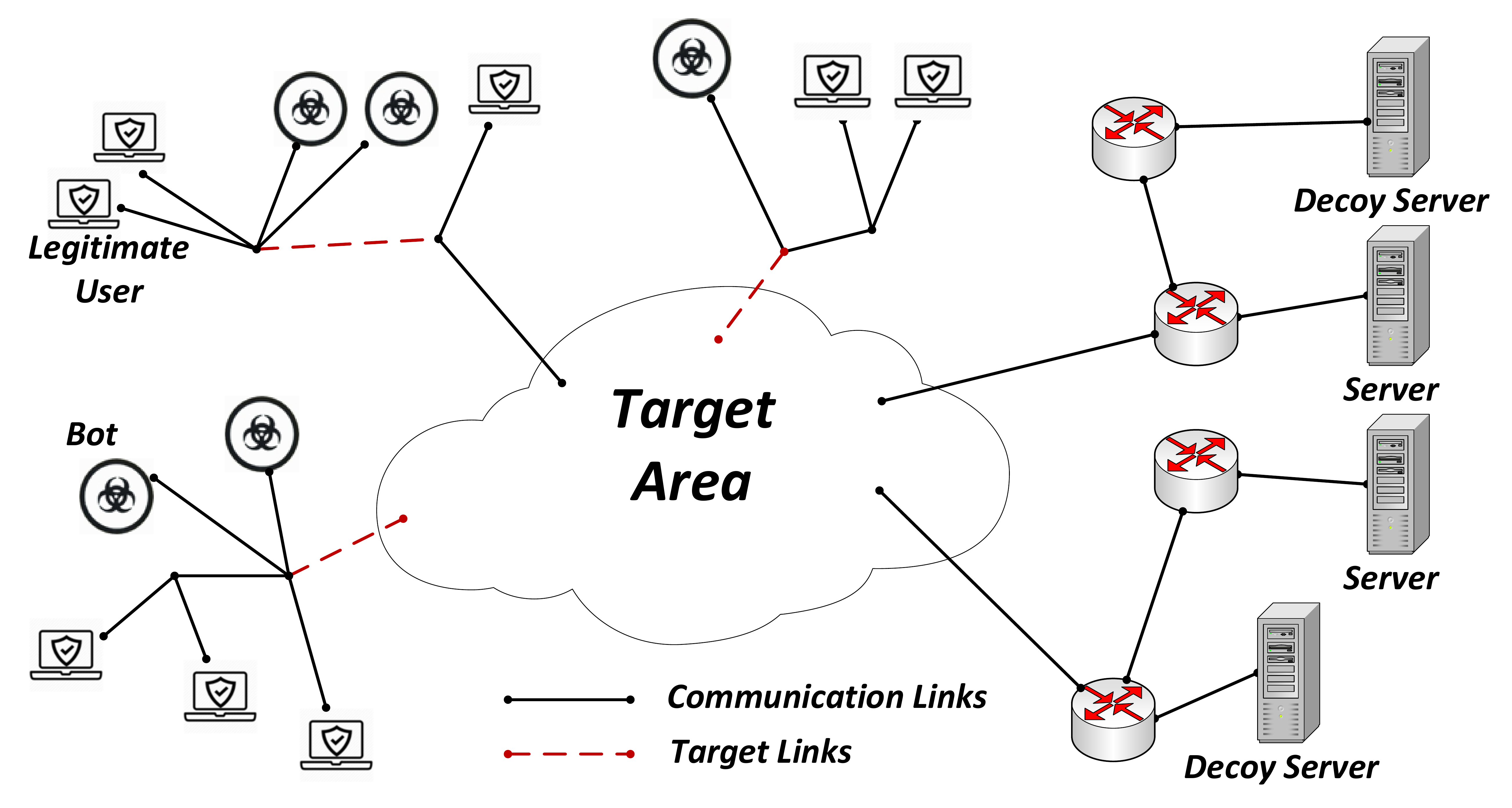}
 \caption{The Crossfire Attack}
\vspace{-1.8em}
\label{fig:crossfire}
\end{figure}

\subsection{Proposed Defense Mechanism} 
\label{sec:mechanism}
Our proposed defense mechanism is based on a \emph{signaling game}. Note that the signaling game is a dynamic game, in which the first move is made by the bot or legitimate user, and the defender is the second player. The game is incomplete information because the defender does not know with whom it is playing. This uncertainty is modeled with Nature. Moreover, we update the knowledge of the defender in repeated interactions with other nodes. Both players decide their actions according to their expected outcome. We explain the details of the game model in Section \ref{sec:game}. The high-level architecture of our defense procedure is shown in Fig. \ref{fig:flowchart}. As can be seen in the Fig. \ref{fig:flowchart}, our system first initializes the parameters depending on the network environment and topology. Then, it checks if the conditions to trigger defense strategy is satisfied. If the decision is made as to defend, RRM is employed, and system parameters are updated. If not, the defender stays idle until it receives an action from the attacker. Each time an action received from the attacker, the defender updates his/her parameters and checks if it is time to defend. 

\begin{figure}[tb] 
\centering
\includegraphics[scale=0.4]{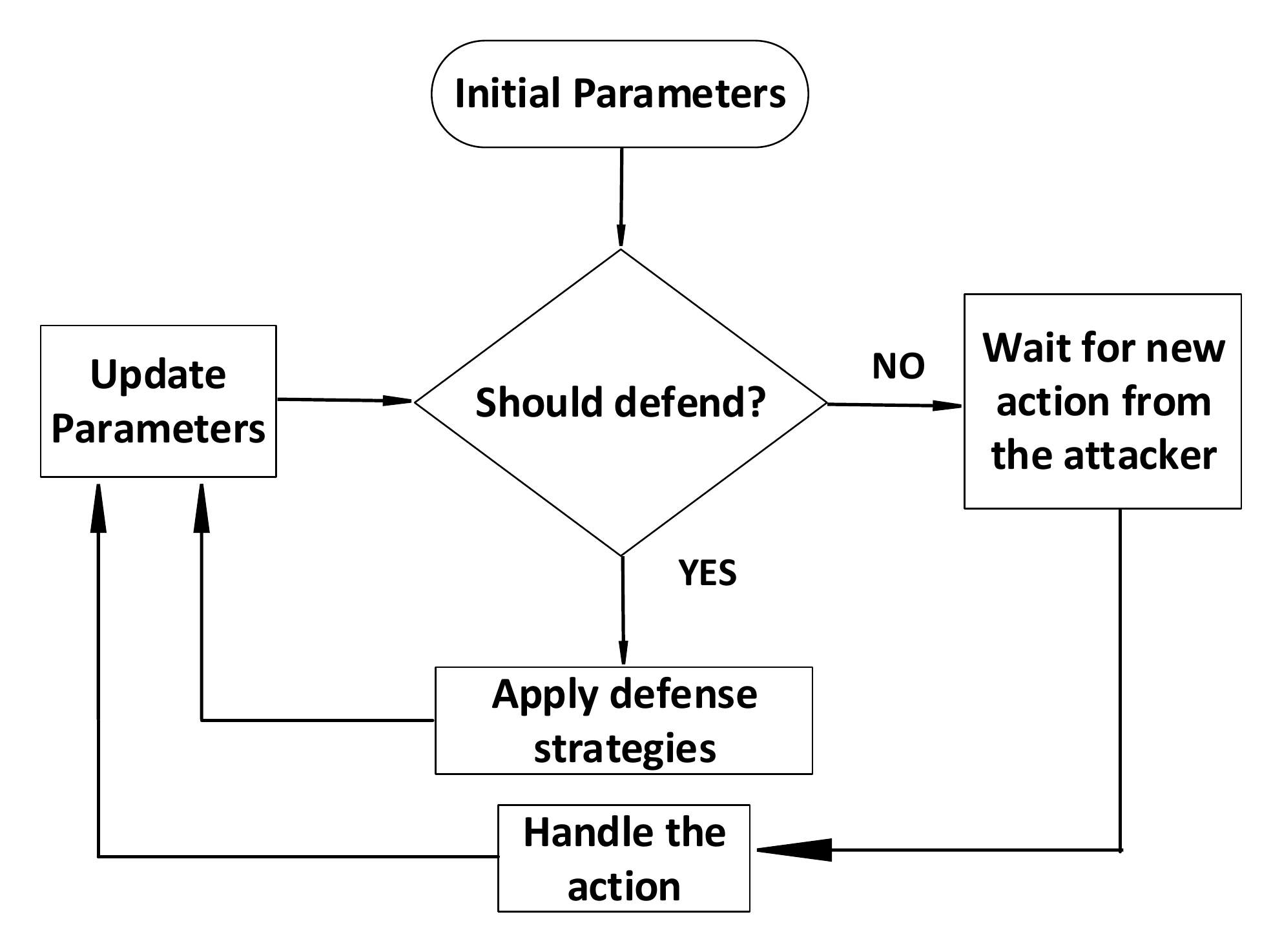}
 \caption{The procedure of defense decision}
\vspace{-0.5em}
\label{fig:flowchart}
\end{figure}

\section{Crossfire Attack and Defense Game Model}
\label{sec:game}

In this section, we first give an overview of the proposed game. Then, we define the players' action sets and model the belief function and payoffs.

\subsection{Overview} \label{sec:overview}


In our 
Crossfire Attack and Defense Game model, a client can be malicious (a bot corresponding to a Crossfire attacker) or benign. Clients send various requests to the servers in the target network. A bot makes ping and traceroute requests so that the attacker can do the reconnaissance (i.e., link map construction) to launch a Crossfire attack. 
The security administrator of the target network (simply the defender) can apply RRM to thwart this reconnaissance, so  the data phase of the Crossfire attack. In the case of applying RRM, the defender randomly changes/mutates the routing paths, thus changing the link map.
We model the interaction between the Crossfire attacker (in fact, each bot individually) 
and the defender using a \emph{signaling game} $\mathcal{G}^{cf}$. Signaling game is a two-player incomplete information game, in which Nature has a unique randomizing strategy (i.e., $\theta$) that is commonly known to both the defender and the attacker. With this randomizing strategy the type of sender would be defined. 




The first player, a.k.a. sender (here, the attacker/bot), is informed of Nature's choice and chooses an action. The second player, a.k.a. the receiver (here, the defender), then chooses an action without knowing Nature's choice but observing the first player's action. Our choice of the signaling game is based on the dynamic and incomplete information characteristic of the Crossfire attack, where the action of one player is conditioned over its belief about the type of the opponent. The action flow of the game is shown in Fig.~\ref{fig:signaling-game}.


The game $\mathcal{G}^{cf}$ is played individually with each client/sender who can be a bot or a legitimate user. 
A bot 
is considered as a type $t_1$ sender, while a legitimate user as a type $t_2$ sender. We represent this set of sender types as $\mathbb{T}$, i.e., $\mathbb{T} = \{t_1, t_2\}$.
The second player of game $\mathcal{G}^{cf}$ is the defender. 
The game is played in the following steps, as shown in Fig.~\ref{fig:signaling-game}. The defender receives two different types of messages, i.e., \emph{Reconnaissance} or \emph{Regular Traffic}. Nature draws type $t_1$ or $t_2$ with a probability of $\theta$ or $1-\theta$, respectively. The defender does not know exactly whether the observed message is coming from a bot or a legitimate user but s/he can only form/follow a belief. According to this belief, the defender must decide whether to defend by applying RRM or not. 
In the following subsections, we define the actions of the players, and accordingly model the belief and payoff functions. In the game, we often use the term "attacker" to represent a bot. 

\begin{figure*}[ht]
\centering
\includegraphics[scale=0.4]{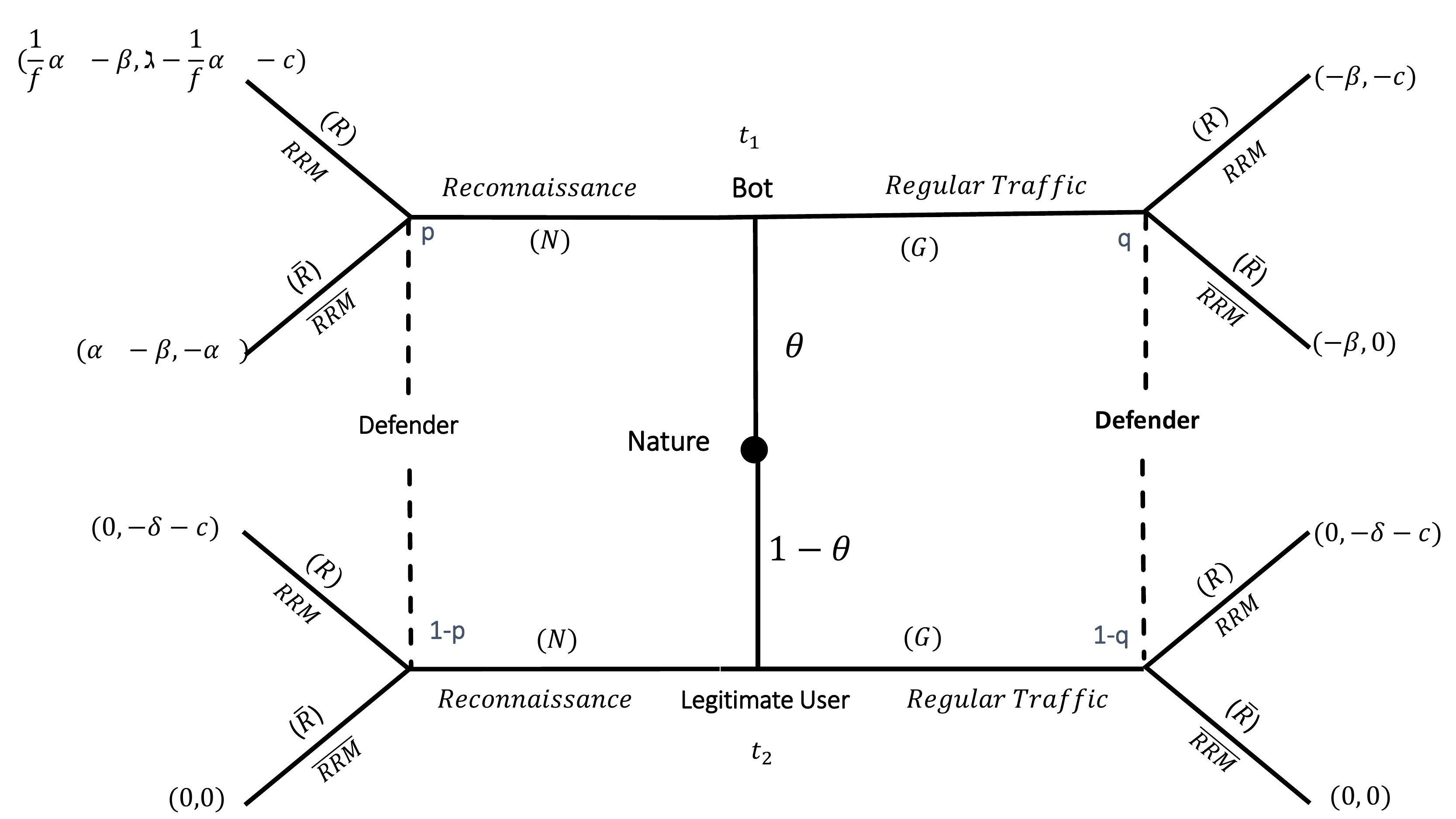}
\caption{$\mathcal{G}^{cf}$ signaling game model representation.}
\label{fig:signaling-game}
\vspace{-14pt}
\end{figure*}

\begin{table}[t]
\centering
\caption{Symbols used in manuscript}
\rowcolors{2}{gray!35}{white}
\label{Table:Symbols}
\begin{tabular}{c | l }    
\textbf{Symbol} & \textbf{\mbox{} Definition} \\  
\hline
\hline
N & Reconnaissance packet sender     \\
G & Regular packet sender     \\
$\overline{R}$ & Defender's action as no RRM     \\
$R$ & Defender's action as RRM     \\
$\alpha$ & Attacker's gain from single reconnaissance packet \\
$\theta$ & Belief value of whether the client is legitimate or bot \\
$\beta$ & Cost for renting bots \\ 
$\delta$ & Cost of not providing information to legitimate user \\ 
$\lambda$ & Gain of defending through reconnaissance  \\
$c$ & Cost of applying RRM \\
$f$ & Frequency of Random Route Mutation (RRM) \\
$N_c$ & Number of hosts in the network

\end{tabular}
\vspace{-14pt}
\end{table}

\subsection{Action Sets} 
\label{sec:strategymodel}

In game $\mathcal{G}^{cf}$, we define the action set $\mathbb{A}$ of the first player by an ordered pair $(m(t_1),m(t_2))$, where $m(t_1)$ is the action of type $t_1$ (i.e., a bot) and $m(t_2)$ is the action of type $t_2$ (i.e., a legitimate user). We also assume that each sender, irrespective of his/her type, can select his/her action from the same action set. Let $\mathbb{A}$ be $\{N,G\}$, where $N$ represents sending reconnaissance packets while $G$ is about sending the usual/regular traffic. It is worth noting that if the attacker sends reconnaissance packets (i.e., plays $N$), then the defender may become suspicious. 
On the other hand, when the attacker sends regular traffic (i.e., plays $G$), 
s/he will not gain necessary information. 

By using these actions, the attacker can make a strategic plan with a combination of actions. This strategy could either be slow and stealthy, or quick and greedy. If the attacker's strategy is stealthy one as defined in Section \ref{sec:crossfire}, a longer time will required to do the reconnaissance. In this long period, there is a higher possibility to have changes in the link map due to existing defense strategies (or even usual network events).
Hence, the attack will have less chance to be successful. On the other hand, if the attacker is so greedy about the attack, s/he might immediately do harm yet possibly end up being caught quickly. 
Thus, the attacker needs to do a trade-off between the time spent for reconnaissance and the risk of being detected by the defender. We will elaborate this issue in more detail in Section~\ref{sec:eval}. 

Similar to the sender, the receiver/defender has an action set $\mathbb{B}$. It 
can respond/reply with a route mutation to deceive the attacker (i.e., play $R$). Otherwise, it reply with no route change (i.e., play $\overline{R}$). Hence, $\mathbb{B} =\{R, \overline{R}\}$. 
If the defender applies route mutation (i.e., $R$), then there will be an extra cost depending on the time it takes to install flow rules on network devices. The cost for the regular replies (i.e., $\overline{R}$) is considered as zero, since there is not any additional efforts required.
We will later elaborate on the defense cost when we obtain the defender payoffs. 


\vspace{-10pt}
\subsection{Belief Model} 
\label{Sec:Belief}



In game $\mathcal{G}^{cf}$, the defender does not know whether it receives messages 
from an attacker or a legitimate user. Yet, it observes these messages and builds the belief function (i.e., $\theta$) about the sender's type 
based on this observation. 
Remember that this $\theta$ is a common knowledge between the sender and the receiver. 
The defender has a different belief for each client/sender: $\theta_y$, where $0 < y \leq N_c$. Furthermore, to represent the belief at a time instance $x$, we use the notation $\theta_y(x)$. The belief ranges from $0$ to $1$, while $0$ showing no sign of suspicion of the client's legitimacy. The belief has two parts: the initial belief that is statically calculated at the beginning of the game and the dynamic belief that is updated according to the sender's actions. 



\subsubsection{Initial Belief}

In our environment, we assume there is no information available about any client, and all clients are at the same level of suspicion. Thus, we assign 0 to each initial belief (i.e., $\theta_y(0)$). However, the defender can assign different values to the initial belief for each client if any prior information is available.




\subsubsection{Dynamic Belief}

The dynamic belief for a specific sender will be calculated each time the game is played. The belief is defined based on three different weighted factors, as shown in Equation~(\ref{Eq:belief}). While these factors are normalized values, the summation of their weights are 1 (100\%).
The first factor is the previous belief (i.e., $\theta_y(x-1)$) that is weighted by $F_1$. 
The reason is that if a client has been behaving suspicious or benign for most of the time, 
his/her current action, which may be different than this characteristic, cannot eliminate (or changes) his/her performance previously observed/learned.
That is why, we consider assigning a high weight (e.g., 90\% 
) to $F_1$.
%
\begin{equation}
\label{Eq:belief}
\theta_y(x) = \theta_y(x-1) \times F_1 + \frac {\sum_{k=1}^{N_c}  \theta_k(x-1)} {N_c} \times F_2 +  A(x) \times F_3
\end{equation}

The second factor is the average of the beliefs for all the network nodes (i.e., $\frac{\sum_{k=1}^{Nc}  \theta_k(x-1)} {Nc} $). This factor reflects the Crossfire attack characteristic in which a number of bots are used to perform the attack. 
$F_2$ is the weight representing its impact on the belief function and we consider it notable (e.g., 9\% 
). The last metric we use to update our belief is the current/latest packet that we have received from the sender (i.e., $A(x)$). It's weight is $F_3$, which we consider as small (e.g., 1\% 
. Even though its weight is considered small, it's impact on the belief adds up rapidly as the sender keeps sending packets to the receiver, and the game is continuously played for each packet receipt.



\subsection{Payoff Model}
\label{sub:Payoff}

In this section, we calculate the benefit and cost of each of the players considering all possible strategies.
Let us first consider the case where the game is played between the attacker/bot (i.e., the sender type is $t_1$) and the defender. This case is represented by the upper part of the signaling game as shown in Fig.~\ref{fig:signaling-game}.

If the defender applies RRM (i.e., plays $R$) in response to a reconnaissance message (i.e., the sender plays $N$) from the bot (i.e., given that Nature has chosen the bot to play with a probability of $\theta$), the defender can successfully defend the targeted server(s) against the attack. Let us assume that $\alpha$ represents the number of packet losses that the attacker can cause by leveraging the information that is collected through the reconnaissance process. In other words, $\alpha$ is the numerical gain of the attacker. If we assume that the frequency of running RRM, is $f$, then the benefit of the attacker will be degraded for the higher frequency of RRM.
Hence,in this case, we model the benefit of the attacker by $\frac{1}{f}\alpha$. We designate the attacker's cost by $\beta$, which is an increasing function on the number of bots that are deployed by the attacker in the network. This parameter also represents the capability of the attacker in our experiments: a decent or strong capability. 
While the decent attacker can afford a small number of bots, the strong attacker can deploy more bots. The number of packets sent by the attacker can be considered as another metric of the attacker's cost, yet this does not bring a direct cost to the attacker. 
Finally, considering the calculated benefit and cost for the attacker we conclude that the attacker's payoff is $\frac{1}{f}\alpha- \beta$. 

We represent the benefit of the defender (gain of defending against reconnaissance) by $\lambda$. This benefit is made by giving the attacker wrong information about reconnaissance, which prevents the data phase of the attack. This action brings protection to the network in terms of decreased packet loss. 
Moreover, we represent the total cost of applying RRM by $c$, which is measured in terms of additional delay and packet loss that might be caused due to the flow table update. 
Given the calculated cost and benefit for the defender, the payoff of the defender running RRM against the bot is: $\lambda - \frac{1}{f}\alpha -c$.

Following the above discussion for the payoffs of the attacker and defender, if the defender does not react to the reconnaissance messages of the bot, the payoff of the attacker and defender will be $-\alpha$ and $\alpha-\beta$, respectively. With a similar analysis, we can show that the payoff of the defender is $-c$. If it does not defend, the payoff would be $0$. The payoff of the attacker when s/he does not send reconnaissance packets will always be equal to $-\beta$.

Now we consider the second game, where the players are the legitimate user and the defender (i.e., the lower part of the signaling game presented in Fig.~\ref{fig:signaling-game}). Since RRM does not have significant impact on the legitimate users' traffic, we consider its payoff as zero for all possible actions of the defender. However, running RRM against legitimate users generates wrong information to the legitimate user, and it may cause a problem with troubleshooting. We model this effect by parameter $\delta$. Hence, the payoff of the defender when it runs RRM will be equal to $-\delta - c$.

\section{Game Analysis and Protocol Design}
\label{Sec:Result}
	
In the following, we first examine game $\mathcal{G}^{cf}$ for the existence and properties of pure strategy Perfect Bayesian Nash Equilibria (PBNE). We then use our analysis to design a defensive protocol to optimize defender strategies against Crossfire attacks.

\subsection{Game Analysis: Perfect Bayesian Nash Equilibrium}
\label{sub-GameAnalusis}

In complete information or non-Bayesian games, a strategy profile is a Nash equilibrium (NE) if every strategy in that profile is a best response to every other strategy. However, players in Bayesian games would like to maximize their expected payoffs, given their beliefs about the other players~\cite{shoham2008multiagent}. A PBNE is characterized as a strategy profile and belief that satisfy the following four requirements~\cite{gibbons1992primer}:


	\vspace{3pt}
	\noindent{\textbf{Requirement 1:}} After observing  any message $m_{j}$ from sender $j$, the defender must have a belief about which types could have sent $m_{j}$. Denote this belief by the  probability distribution $\mu(t_{i}|m_{j})$, where $\mu(t_{i}|m_{j})\ge 0 $ for each type $t_{i}$, and $\sum_{t_{i}\in T}{\mu(t_{i}|m_{j})}=1$ 
	
	
	\vspace{3pt}
	\noindent{\textbf{Requirement 2:}} For each message $m_{j}$, the defender's action $a^*(m_{j})$ must maximize his expected utility $u_d$, given the belief $\mu(t_{i}|m_{j})$ about which type could have sent $m_{j}$. That is, $a^*(m_{j})$ satisfies:
	\begin{equation*}
		\max_{m_j \in M}{\sum_{t_{i}\in T}{\mu(t_{i}|m_{j})u_{d}(t_{i},m_{j},a(m_j))}}
	\end{equation*}
	
	\vspace{3pt}
	\noindent{\textbf{Requirement 3:}}
	For each type $t_{i}$, the sender's (whether a bot or a legitimate user) message $m^{*}(t_{i})$ must maximize his utility ($u_d$), given the defender's strategy $a^{*}(m_{j})$. That is, $m^*(t_{i})$ satisfies:
	\begin{equation*}
		\max_{m_{j}\in M}{u_{d}(t_{i},m_{j},a^{*}(m_{j}))}
	\end{equation*}
	
	\vspace{3pt}
	\noindent{\textbf{Requirement 4:}} For each $m_{j}\in M$, if there exists type $t_{i}$ such that $m^{*}(t_{i})=m_{j}$, then the defender's belief at the information set corresponding to $m_{j}$ must follow from Bayes' rule and the sender's strategy:
	\begin{equation*}
		\mu(t_{i}|m_{j})=\frac{p(t_{i})}{\sum_{t_{i}\in T_{j}}{p(t_{i})}}
	\end{equation*}
	where $T_{j}$ denotes the set of types that send the message  $m_{j}$. Considering the above requirements, we can now define the PBNE.

	\begin{definition}
		A pure-strategy PBNE in a signaling game  is a pair  of strategy $m^{*}(t_{i})$ and $a^{*}(m_{j})$  and a belief $\mu(t_{i}|m_{j})$ satisfying Requirements 1 to 4.
	\end{definition}

	In the following, we use $(p,1-p)$ and $(q,1-q)$ to denote the second player's (the defender) beliefs at its two information sets. Recall that for the defined signaling game in Figure~\ref{fig:signaling-game}, the sender's pure strategy determined by an ordered pair $(m(t_{1}),m(t_{2}))$ where $m(t_{1})$ and $m(t_{2})$ are the chosen strategies by user types $t_{1}$ and $t_{2}$, respectively. Note that in our model $t_1$ and $t_2$ are bot and legitimate types. Similarly, the defender's strategy is determined by an ordered pair  $(a(N), a(G))$, in which $a(N)$ and $a(G)$ demonstrate the defender strategy following the sender's reconnaissance and regular traffic signals, respectively. 
	
	Furthermore,  a pure strategy  PBNE profile  is   determined as tuple  
	$\{\mathcal{S}_{1},\mathcal{S}_{2},p,q\}$, in which $\mathcal{S}_{1}$  is the pair of the sender strategy chosen by each type (whether bot or legitimate user),  $\mathcal{S}_{2}$ is the pair of defender strategy in response to each signal, and $p$ and $q$ are attacker belief concerning the type of sender for reconnaissance ($N$) or regular ($G$) signal, respectively. According to the sender pure strategy, two kinds of PBNE could exist in signaling game, called \emph{pooling} and \emph{separating}.
	

	A PBNE is called \textit{pooling  equilibrium} if $m(t_{1})=m(t_{2})$. In other words,  the bot and legitimate user send the same signal, regardless of their types. In contrast, a PBNE is called \textit{separating  equilibrium} if $m(t_{1})\neq m(t_{2})$, i.e., the bot and legitimate users send a different signal, depending on their types. We now examine $\mathcal{G}^{cf}$ for (pure) PBNE. We first probe the existence of pooling equilibria. 

Given the definition of pooling and separating equilibrium, in the following, we derive all conditions under which there exist these Nash equilibrium profiles in our defined game. In other words, since the values of payoffs vary given the topology of the network and the period of sending different packets and probes in the reconnaissance phase, there would be different conditions to be checked for the existence of these Nash equilibrium points.

	\begin{theorem}
		\label{Thm:PoolingN}
		For any values of  $\theta$, there exists a pooling equilibrium on $N$, in $\mathcal{G}^{cf}$ signaling game.
	\end{theorem}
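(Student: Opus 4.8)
The plan is to exhibit an explicit pooling profile on $N$ and verify Requirements 1--4 directly, showing that each verification goes through irrespective of $\theta$. I would begin by fixing the sender strategy $\mathcal{S}_{1}=(N,N)$, so that both the bot ($t_{1}$) and the legitimate user ($t_{2}$) send reconnaissance. Since $N$ is then on the equilibrium path and is sent by both types, Requirement 4 (Bayes' rule) forces the defender's belief at the $N$ information set to equal the prior, i.e. $p=\theta$. The $G$ information set is off-path, so Requirement 1 leaves the belief $q$ there unconstrained, and I would keep $q$ as a free parameter to be fixed later.

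Next I would compute the defender's best response at each information set (Requirement 2). At $N$, the defender compares the expected payoff of $R$, namely $\theta(\lambda-\frac{1}{f}\alpha-c)+(1-\theta)(-\delta-c)$, against that of $\overline{R}$, namely $-\theta\alpha$; whichever is larger determines $a^{*}(N)$, with ties broken arbitrarily. The important point is that this comparison yields a well-defined best response for every $\theta$, so the on-path defender action is never the obstacle. At the off-path set $G$, the expected payoff of $R$ is $-c-(1-q)\delta$, which is $\le 0$ for all $q$ because $c,\delta\ge 0$, whereas $\overline{R}$ yields $0$; hence $a^{*}(G)=\overline{R}$ is sequentially rational for every choice of $q$, and I can simply fix any $q\in[0,1]$.

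The crux is Requirement 3, the senders' incentive constraints, and here the uniformity in $\theta$ becomes transparent. For the bot, playing $N$ yields either $\frac{1}{f}\alpha-\beta$ or $\alpha-\beta$ depending on $a^{*}(N)$, whereas deviating to $G$ yields $-\beta$ regardless of the defender's reply; since the reconnaissance gain is nonnegative ($\frac{1}{f}\alpha,\ \alpha\ge 0$), the bot weakly prefers $N$ and never deviates. For the legitimate user every outcome gives payoff $0$, so this type is indifferent and $N$ is (weakly) a best response as well. Neither incentive constraint involves $\theta$, which is precisely why the pooling profile survives for all belief values.

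I would then assemble the profile $\{(N,N),\,(a^{*}(N),\overline{R}),\,p=\theta,\,q\}$ and observe that it satisfies Requirements 1--4 by the steps above, giving a pooling PBNE on $N$ for any $\theta$. I do not expect a genuine technical obstacle here: the only point requiring care is the off-path set, where one must confirm both that no specification of $q$ can make $R$ profitable for the defender (guaranteed by $c,\delta\ge 0$) and that no such specification tempts either sender to switch to $G$ (guaranteed by the sign of the reconnaissance gain together with the legitimate user's indifference). Handling these two off-path checks cleanly is the main thing to get right.
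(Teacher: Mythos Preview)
Your proposal is correct and follows essentially the same route as the paper: pool on $(N,N)$, set $p=\theta$ by Bayes' rule, show that $\overline{R}$ is always sequentially rational at the off-path information set $G$ (since $-c-(1-q)\delta\le 0$ for all $q$), and verify that neither sender type gains by deviating to $G$. The only difference is cosmetic---the paper explicitly computes the threshold $\theta^{*}=\frac{\delta+c}{\delta+\alpha(1+\lambda-1/f)}$ and splits into the two cases $a^{*}(N)=R$ (for $\theta\ge\theta^{*}$) and $a^{*}(N)=\overline{R}$ (for $\theta\le\theta^{*}$), yielding the two named equilibria used later in the protocol, whereas you leave $a^{*}(N)$ as ``whichever payoff is larger''; the underlying existence argument is identical.
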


\begin{proof} Let us suppose that there exists a pooling NE with $(N,N)$ strategy for the sender.  Then the defender's information set corresponding to $N$ is on the equilibrium path, so the defender's beliefs $(p,1-p)$ at this information set is determined by Bayes' rule and sender's strategy: $p=\theta$. We first compute the expected payoff of the defender given its belief. The defender's expected payoff for playing $R$ is:
		\begin{equation}
			\label{Exp_RRM}
			\theta \times(\lambda-\frac{1}{f}\alpha-c) +(1-\theta )\times(-\delta-c)
		\end{equation}

And defender's expected payoff for playing $\overline{R}$ is:
		\begin{equation}
			\label{Exp_NotRRM}
			\theta \times(-\alpha)+(1-\theta )\times(0)
		\end{equation}

Comparing the above payoffs for the defender we can define a threshold for belief, called $\theta^{*}$, which is equal to $\frac{\delta +c}{ \delta +\alpha (1+\lambda-\frac{1}{f}) }$. We first assume that $\theta^{*} \le 0$, then the following two cases could take place for the dominant strategy of the defender:

\begin{itemize}
\item $\theta \ge \theta^{*}$ := $\frac{\delta +c}{ \delta +\alpha (1+\lambda-\frac{1}{f})}$: Therefore playing $R$ dominates $\overline{R}$, following  $N$ signal, which can be easily verified by comparing the expected payoffs presented in Equations (\ref{Exp_RRM}) and (\ref{Exp_NotRRM}). 

Now we should check whether the senders have incentives to deviate from the $N$ strategies, given the defender strategy which is $R$ in this case. If the defender chooses strategy $R$ for responding to message $G$, there is no incentive for the senders to deviate from their strategies. In fact, in that case, the sender of type 1 ($t_{1}$ or Bot) achieves $-\beta$ instead of $\frac{1}{f} \alpha - \beta$. The sender of type 2 ($t_{2}$ or legitimate user)  obtains $0$ in both cases.  Hence, there is no incentive for the senders to deviate from strategy $N$. 

It remains to consider the defender's belief at the information set corresponding to $G$ (i.e., off the equilibrium path). We need to show if the strategy of playing $R$ is optimal given this belief. For this purpose and given that the $R$ is the best response when $\theta \ge  \theta^{*}$, we should calculate the expected payoffs of the defender when it plays $R$ and $\overline{R}$ following strategy $G$. These payoffs are $q  \times -c + (1-q) \times (-\delta -c )$ and $q  \times 0 + (1-q)  \times 0 $. Since there are no values for $q$ that makes the payoff of defender greater for playing $R$, there is no pooling on $(G,G)$ when the defender plays $(R,R)$. 

Similarly, to verify if there exists a NE where the defender plays $(R,\overline{R})$ we should first show that there are no incentives for the sender to deviate from the pooling $(N,N)$ strategy. This time considering the defender strategy $(R,\overline{R})$, both types of senders do not have any incentive to deviate from $N$ and play $G$ as their payoff would be decreased from $\frac{1}{f}\alpha - \beta$ to $-\beta$ for the bot player and the legitimate user, its payoff remains $0$. Now, we should again calculate the payoff of the defender given its belief $q$. In other words, this time we need to compute the values of $q$, where   $q  \times -c + (1-q) \times (-\delta -c ) \leq 0$. Hence for all values of $q \le 0$, the payoff of the defender would be greater when it plays $\overline{R}$. Then, we can conclude that there exists one pooling equilibrium $\{ (N,N),(R,\overline{R}),p=\theta,q\} $ for any $q$  in $\mathbb{G}_{cf}$ when $\theta \ge \theta^{*}$.

\item $\theta \le \theta^{*}$ := $\frac{\delta +c}{ \delta +\alpha (1+\lambda-\frac{1}{f})}$: In this case the best response of the defender following $N$ signal is $\overline{R}$. Similar to the previous case, we should first verify if there is any incentive for the senders to deviate from $N$, if the defender plays $\overline{R}$ and $\overline{R}$ off the equilibrium path. Since both types will not gain any extra benefits (for bots the payoff decreases from $\alpha - \beta$ to $\beta$ and for the legitimate users there is no differences), there is no incentives for them to deviate from playing $N$. Similar calculations can be done for the payoff of the defender, off the equilibrium path to find the possible values for $q$. Similar to the previous case, there are no values for $q$, where the expected payoff of playing $R$ would be bigger than $\overline{R}$. Considering all possible deviations for the case that the defender plays $\overline{R}$ and the values of the belief for $q$ there exists another pooling equilibrium, where $\{ (N,N),(\overline{R},\overline{R}),p=\theta,q\} $ for any $q$ in $\mathcal{G}^{cf}$ when $\theta \le \theta^{*}$.
\end{itemize}
\end{proof}	
	
	
	Theorem~\ref{Thm:PoolingN} represents that if the selected strategy of both types of the senders are $N$, there is an equilibrium considering the belief of the defender for the possible attacker. In this case, depending on the value of $\theta$, the defender should select one of the strategies $R$ or $ \overline{R}$ upon receiving signal $N$. In other words, if the probability of the sender being a bot (i.e., $ \theta $) is greater than $\theta^{*} $, the defender should run $RRM$. Otherwise, the best response for the defender is playing $\overline{R}$ strategy. In the case of $\theta \ge \theta^{*}$ and $\theta \le \theta^{*}$, if users send reconnaissance packets, the defender's response  will be $R$  and $\overline{R}$, respectively.

	\begin{table*}[t]
		\centering
		\caption{Equilibria and Their Conditions}
		\label{Table_PBNE}
		\begin{tabular}{|c|c|c|c|c|c|c|}
			\hline
			\multirow{2}{*}{Theorem}&
			\multirow{2}{*}{Conditions}&\multirow{2}{*}{Range of $\theta$} & \multirow{2}{*}{$\#$}& \multirow{2}{*}{$PBNE$}& \multicolumn{2}{|c|}{Condition on Beliefs}\\
			\cline{6-7}
			&ٍ & && &On-equilibrium & Off-equilibrium\\
			\hline
			Theorem 1& $-$&$\theta\ge\theta^{*}$& $\mathcal{PBNE}_{1}$ & $\{(N,N),(R,\overline{R}),p,q\}$ &$p=\theta$ & $ \forall q$\\
			\hline
			Theorem 1& $-$
			&$\theta\le\theta^{*}$ &$\mathcal{PBNE}_{2}$ & $\{(N,N),(\overline{R},\overline{R},),p,q\}$ &$p=\theta$ & $ \forall q$\\
			\hline
			Theorem 4&$\lambda \ge\lambda^* $&$\forall \theta^{*}$&$\mathcal{PBNE}_{3}$&$\{(N,G),(R,\overline{R}),p,q\}$&$p=1$&$q=0$\\
			\hline 
		Theorem 4&$\lambda \le\lambda^* $&$\forall \theta^{*}$&$\mathcal{PBNE}_{4}$&$\{(N,G),(\overline{R},\overline{R}),p,q\}$&$p=1$&$q=0$\\
			\hline 
			
		\end{tabular}
	\end{table*}

	\begin{theorem}
\label{Thm:PoolingG}
		For any values of  $\theta$, there does not exist any pooling equilibrium on $G$, in $\mathcal{G}^{cf}$ signaling game.
	\end{theorem}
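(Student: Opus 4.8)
The plan is to argue by contradiction, exploiting the fact that for the bot the reconnaissance signal strictly dominates regular traffic no matter how the defender reacts. Suppose, toward a contradiction, that a pooling equilibrium on $G$ exists, so that $m(t_1) = m(t_2) = G$. Then the information set following $G$ lies on the equilibrium path, and Requirement 4 forces the defender's belief there to coincide with the prior by Bayes' rule, i.e. $q = \theta$. The information set following $N$ is off the equilibrium path, so the belief $p$ is unconstrained and the defender may, a priori, respond to a (deviating) $N$ signal with either $R$ or $\overline{R}$, depending on which its off-path belief renders optimal under Requirement 2.

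The heart of the argument is to show that type $t_1$ (the bot) always possesses a strictly profitable deviation from $G$ to $N$, thereby violating Requirement 3. Under the pooling profile the bot earns $-\beta$ from playing $G$, and this value is independent of the defender's action at $G$ since $G$ yields no reconnaissance gain. If the bot instead deviates to $N$, its payoff is $\frac{1}{f}\alpha - \beta$ when the defender plays $R$ and $\alpha - \beta$ when the defender plays $\overline{R}$, exactly as computed in Section~\ref{sub:Payoff}. Because $\alpha > 0$ and $f > 0$, both quantities strictly exceed $-\beta$. Hence, irrespective of the defender's response to $N$, deviating to $N$ is strictly better for the bot, so $(G,G)$ cannot be sustained.

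The one point requiring care — and the step I would flag as the main, though modest, obstacle — is the off-equilibrium information set following $N$: since equilibrium play never reaches it, its belief $p$ is not pinned down by Bayes' rule, and one cannot presume a single defender response there. The deviation argument sidesteps this difficulty by verifying that the bot's gain holds uniformly over both admissible defender actions $R$ and $\overline{R}$, so no choice of off-path belief $p$ (and hence no off-path defender action) can deter the deviation. This contradicts the supposed equilibrium. Since the legitimate user's payoff is identically zero and plays no role, the contradiction arises solely from the bot's incentive and for every value of $\theta$; we therefore conclude that $\mathcal{G}^{cf}$ admits no pooling equilibrium on $G$.
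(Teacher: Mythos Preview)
Your proof is correct and follows essentially the same approach as the paper: both argue that the bot has a strictly profitable deviation from $G$ to $N$ regardless of the defender's off-path response at $N$, since $\frac{1}{f}\alpha-\beta$ and $\alpha-\beta$ both exceed $-\beta$. The paper additionally computes that the defender's on-path best response to $(G,G)$ is $\overline{R}$, but your observation that the bot's payoff at $G$ is $-\beta$ irrespective of the defender's action there renders that step unnecessary; otherwise the arguments coincide.
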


\begin{proof}
In the game $\mathcal{G}^{cf}$ for any values of belief $\theta$, the defender's best respond to pooling  strategy of $(G,G)$ is always $\overline{R}$. In other words, the defender does not perform RRM in this case. Consequently, we should see if the senders have any incentive to deviate from $G$ strategy. Let us consider two possible strategies of the defender off the equilibrium path (when it believes in playing with Bot with probability $p$). Since by deviating from $G$ to $N$, the sender of type $t_1$ (Bot) can always increase his/her payoff from $-\beta$ to $\frac{1}{f}\alpha - \beta$ (when the defender plays $R$ off the equilibrium path) or from $-\beta$ to $\alpha - \beta$ (when the defender plays $\overline{R}$ off the equilibrium path), $(G,G)$ cannot be at any pooling equilibrium.
\end{proof}

	\begin{theorem}
\label{Thm:Separating1}
		There is no separating equilibrium on $(G,N)$ in the $\mathcal{G}^{cf}$ signaling game.
	\end{theorem}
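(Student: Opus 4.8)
The plan is to argue by contradiction, reusing the template already established for Theorem~\ref{Thm:PoolingG}. Suppose a separating equilibrium with sender profile $(G,N)$ exists, i.e.\ the bot (type $t_1$) plays $G$ and the legitimate user (type $t_2$) plays $N$. Since a separating profile requires both types to occur with positive probability, both of the defender's information sets lie on the equilibrium path, so Requirement~4 fully pins the beliefs by Bayes' rule: $G$ is sent only by $t_1$, giving $q=1$, and $N$ is sent only by $t_2$, giving $p=0$. There are therefore no off-path beliefs left free in this case.

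Next I would determine the defender's best responses via Requirement~2. At the information set following $G$ (belief $q=1$, surely a bot), the expected payoffs are $-c$ for $R$ and $0$ for $\overline{R}$, so $\overline{R}$ is strictly preferred. At the information set following $N$ (belief $p=0$, surely a legitimate user), the payoffs are $-\delta-c$ for $R$ and $0$ for $\overline{R}$, so again $\overline{R}$ is the best response. Hence the only defender strategy consistent with this profile is $(a(N),a(G))=(\overline{R},\overline{R})$.

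Finally I would test Requirement~3 for the bot. Along the proposed equilibrium path the bot plays $G$ and, facing $a(G)=\overline{R}$, collects $-\beta$; deviating to $N$ it faces $a(N)=\overline{R}$ and collects $\alpha-\beta$, which strictly exceeds $-\beta$ whenever $\alpha>0$. Thus the bot has a profitable deviation, contradicting the equilibrium hypothesis, so no separating equilibrium on $(G,N)$ exists.

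This argument has essentially no hard step: the entire obstacle collapses to the single observation---already exploited in Theorem~\ref{Thm:PoolingG}---that message $N$ strictly dominates $G$ for a bot, since both $\frac{1}{f}\alpha-\beta$ and $\alpha-\beta$ exceed $-\beta$ once $\alpha>0$. Because this dominance holds against \emph{either} defender reply, the deviation is profitable independently of the precise best response computed at the $N$ information set; consequently the non-existence of a $(G,N)$ separating equilibrium does not hinge on the threshold $\theta^{*}$ or on $\lambda$, and the same reasoning simultaneously rules out any equilibrium profile in which the bot is prescribed to play $G$. The only care needed is to confirm that the bot's strict preference is uniform over the defender's action at $N$, which it is.
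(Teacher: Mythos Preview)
Your argument is correct and follows essentially the same route as the paper's own proof: fix the on-path beliefs $q=1$, $p=0$ by Bayes' rule, observe that the defender's best response is $\overline{R}$ at both information sets, and then exhibit the bot's profitable deviation from $G$ to $N$ (gaining $\alpha-\beta$ instead of $-\beta$). Your version is slightly more explicit in computing the defender's payoffs and adds the useful remark that the bot's deviation is profitable regardless of the defender's response at $N$, but the core reasoning is identical.
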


\begin{proof}
Suppose $(G,N)$ is a pair of senders' strategy, then both of the defender's information sets are on the equilibrium path, so both beliefs are determined by Bayes' rule and sender strategy: $q=1$ and $p=0$. Defender's best response following  these beliefs is always $\overline{R}$, for both types of the sender. Hence, we should check if the sender's strategy is optimal given the  defender strategy. If the sender of type 1 deviates by playing $N$ signal instead of $G$, the defender responds with $\overline{R}$, giving $t_{1}$ (i.e., the Bot) a payoff of $\alpha - \beta$, which exceeds $t_{1}$'s payoff of $-\beta$ from playing $G$. Thus, $(G,N)$ cannot establish any separating equilibrium.
\end{proof}

	\begin{theorem}
\label{Thm:Separating2}
		There are two classes of separating equilibrium $\{ (N,G),(\overline{R},\overline{R}),p=1,q=0\} $ if $\lambda \le \alpha (1/f-1)+c $ and $\{ (N,G),(R,\overline{R}),p=1,q=0\} $ $\lambda \ge \alpha (1/f-1)+c $  in the $\mathcal{G}^{cf}$ signaling game.
	\end{theorem}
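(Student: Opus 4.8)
The plan is to verify directly that each of the two proposed profiles satisfies Requirements 1--4 for a pure-strategy PBNE, exploiting the fact that in a separating profile $(N,G)$ \emph{both} of the defender's information sets lie on the equilibrium path, so (unlike the pooling case) there is no freedom in choosing off-path beliefs. First I would pin down the on-path beliefs: since only the bot (type $t_1$) sends $N$ and only the legitimate user (type $t_2$) sends $G$, Requirement~4 (Bayes' rule) forces $p=1$ at the information set following $N$ and $q=0$ at the information set following $G$, matching the belief entries in Table~\ref{Table_PBNE}.

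Next I would compute the defender's best response at each information set from the payoff table of Section~\ref{sub:Payoff}. Following $G$ (belief $q=0$, i.e.\ the sender is a legitimate user), playing $R$ yields $-\delta-c$ while $\overline{R}$ yields $0$; since $\delta,c\ge 0$ the defender strictly prefers $\overline{R}$, so $a(G)=\overline{R}$ unconditionally. Following $N$ (belief $p=1$, i.e.\ the sender is a bot), playing $R$ yields $\lambda-\tfrac{1}{f}\alpha-c$ while $\overline{R}$ yields $-\alpha$; comparing these gives the threshold $\lambda^{*}=\alpha(1/f-1)+c$, so $a(N)=R$ when $\lambda\ge\lambda^{*}$ and $a(N)=\overline{R}$ when $\lambda\le\lambda^{*}$. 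This produces exactly the two candidate defender strategies $(R,\overline{R})$ and $(\overline{R},\overline{R})$ under the stated conditions, and establishes Requirement~2.

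With the defender's strategy fixed, I would then check Requirement~3, that neither sender type wants to deviate. For the bot the reconnaissance gain is always nonnegative: against $(R,\overline{R})$ playing $N$ gives $\tfrac{1}{f}\alpha-\beta\ge-\beta$, and against $(\overline{R},\overline{R})$ it gives $\alpha-\beta\ge-\beta$, so in both regimes $N$ strictly beats deviating to $G$ and the bot never deviates. Requirement~1 holds since $p=1$ and $q=0$ are legitimate distributions, which together with the two steps above completes the verification of both profiles.

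The step I expect to require the most care is the legitimate-user side of Requirement~3. Because RRM does not affect legitimate traffic, type $t_2$ earns payoff $0$ regardless of its own message and of the defender's response, so it is only \emph{weakly} indifferent between $N$ and $G$. I would argue that this weak indifference suffices for equilibrium existence --- prescribing $G$ is a best reply and nothing induces a profitable deviation --- while observing that it is precisely this indifference that lets the pooling-on-$N$ profiles of Theorem~\ref{Thm:PoolingN} coexist with these separating ones. I would also treat the boundary $\lambda=\lambda^{*}$ explicitly, where the defender is indifferent following $N$ and both separating equilibria are supported simultaneously.
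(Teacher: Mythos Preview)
Your proposal is correct and follows essentially the same route as the paper: fix beliefs $p=1$, $q=0$ by Bayes' rule since both information sets are on-path, compare the defender's payoffs $\lambda-\tfrac{1}{f}\alpha-c$ versus $-\alpha$ following $N$ (and $-\delta-c$ versus $0$ following $G$) to obtain the threshold $\lambda^{*}=\alpha(1/f-1)+c$, and then verify that neither sender type gains by deviating. Your treatment is in fact slightly more careful than the paper's own proof---you explicitly verify Requirement~2 at the $G$ information set, flag the legitimate user's weak indifference, and address the boundary $\lambda=\lambda^{*}$, all of which the paper leaves implicit.
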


\begin{proof}
Similar to Theorem~\ref{Thm:Separating1}, we first assume that $(N,G)$ is a pair of senders' strategy. Then both of the defender's information sets are on the equilibrium path: $p=1$ and $q=0$. Considering $(N,G)$ strategy of the senders, the defender's best response following these beliefs can be calculated by comparing the payoffs of the defenders. For sender with type $t_1$ (i.e., bot), the best response of the defender is calculated by comparing the following two payoffs: $\lambda - \frac{1}{f} \alpha -c$ and $-\alpha$. Two cases can be identified given that $\lambda^* = \alpha (\frac{1}{f} -1 )+ c$:
\begin{itemize}
\item $\lambda \ge\lambda^* $: The best response to play $N$ by type $t_1$ is $R$ and the best response to type $G$ by type $t_2$ is $\overline{R}$. We should check, if the sender's strategy is optimal given the  defender strategy. Since the bot payoff would be decreased from $\frac{1}{f} \alpha - \beta$ to $-\beta$ and there is no difference for the legitimate payoff, we can conclude that the $\{ (N,G),(\overline{R},\overline{R}),p=1,q=0\} $ is a PBNE. We name it as $\mathcal{PBNE}_{3}$ for later references. 
\item $\lambda \le \lambda^* $: In this case, the best response to play $N$ by type $t_1$ is $\overline{R}$ and the best response to type $G$ by type $t_2$ is $\overline{R}$. Similar to the previous case there is no incentive for the senders to deviate from $(N,G)$ strategy.
\end{itemize}
\end{proof}

The above results present all possible PBNEs of game $\mathcal{G}^{cf}$. Considering these equilibria, we can now provide the best plan of actions for the defender given its belief about the sender's type and its payoff at different strategies.

\subsection{Protocol Design}
\label{sub-protocol}

In this section, we design a strategic Crossfire defense mechanism to optimize the strategy of the defender. Table \ref{Table_PBNE} summarizes the results presented in Theorems~\ref{Thm:PoolingN}, \ref{Thm:PoolingG}, \ref{Thm:Separating1}, and \ref{Thm:Separating2}. All possible separating and pooling PBNEs are displayed. Leveraging these results, we design our proposed strategic Crossfire defense mechanism, namely \emph{Strategic RRM}. The pseudocode of this proposed mechanism is given in Algorithm \ref{Algo_Strategic}. 

In the Algorithm \ref{Algo_Strategic}, we first pick a value for the frequency ($f$) of RRM that represents how often the defender can perform RRM. It is decided through our preliminary experiments. 
The minimum period of running RRM is shown as $P$ which is assigned to $1/f$. 
Then the belief for each client is initialized to zero at the beginning of the experiments. 
The defender always updates its belief each time it receives a packet, and it computes the potential gain/payoff.  
Later, if the belief is greater than $\theta^*$ and the received packet is reconnaissance, the defender chooses to run $RRM$ at the end of $P$ period. In addition to that, if the value of $\lambda$ is greater than $\lambda^*$, the defender should also run the RRM. Hence, the optimal decision for the defender is obtained according to equilibria $\mathcal{PBNE}_{1}$ and $\mathcal{PBNE}_{3}$. 
The same processes after the initialization will be performed every $P$ seconds. 



\begin{algorithm}[t]
\caption{Strategic Crossfire Defense Mechanism}
\label{Algo_Strategic}
\small
\begin{algorithmic}[1]
\STATE $h$: Host ID (i.e., its IP address) communicating with the server
\STATE $t^h_{alive}$: The time that host $h$ is still transmitting packets
\STATE $f$: The frequency at which we can run RRM
\STATE $P$ $\Leftarrow$  $\frac{1}{f}$
\STATE $n := 1$
\STATE $\theta_h$ $\Leftarrow$   Initialize the belief for $h$ 

\WHILE{$((n-1) \times T) \le t^h_{alive}$}
	\STATE Estimate/calculate $\alpha$, $\beta$, $\lambda$, $\delta$, and $c$.

	\FOR {Each packet received from $h$ between $(n - 1)T$ and $nT$}
		\STATE  Update $\theta_h$ according to the packet type (signal)
		\STATE Compute $\theta^*$ (Theorem~\ref{Thm:PoolingN})
		\STATE Compute $\lambda^*$  (Theorem~\ref{Thm:Separating2})
	\ENDFOR

	\IF {$\lambda \ge \lambda^*$}
		\STATE Perform RRM	
	\ELSIF {$\theta \ge \theta^*$} 
		\STATE Perform RRM	
	\ENDIF	
	
	\STATE $n := n + 1$
\ENDWHILE
\end{algorithmic}
\normalsize
\end{algorithm}

\section{Evaluation}
\label{sec:eval}


In this section, we briefly discuss our experimental setup and evaluation metrics. Then, we present the findings from the experiments. 

\subsection{Experiment Setup}

The network environment is implemented using Mininet \cite{mininet} and FloodLight \cite{floodlight} is used as an SDN Controller. The proposed signaling game-based defense strategy is developed as an application on top of the FloodLight Controller.
In the experimental setup, we consider the network topology with 20 switches, 3 decoy servers, 1 target server, and several clients, as shown in Fig. \ref{fig:topology}. The number of clients differs for each experiment based on the attack model and the purpose of the experiment. The target server is encircled by red and decoy servers are placed interior part of the network topology in Fig. \ref{fig:topology}. The bandwidth of each link in the network is configured as 100 Mbps. 
A python-based client/server application is implemented for the communication between host-target servers. 
We implement different attacker models and apply in experiments to evaluate them. The configuration of the network is further explained below.

\begin{figure}[tb] 
\centering
\includegraphics[width=\columnwidth]{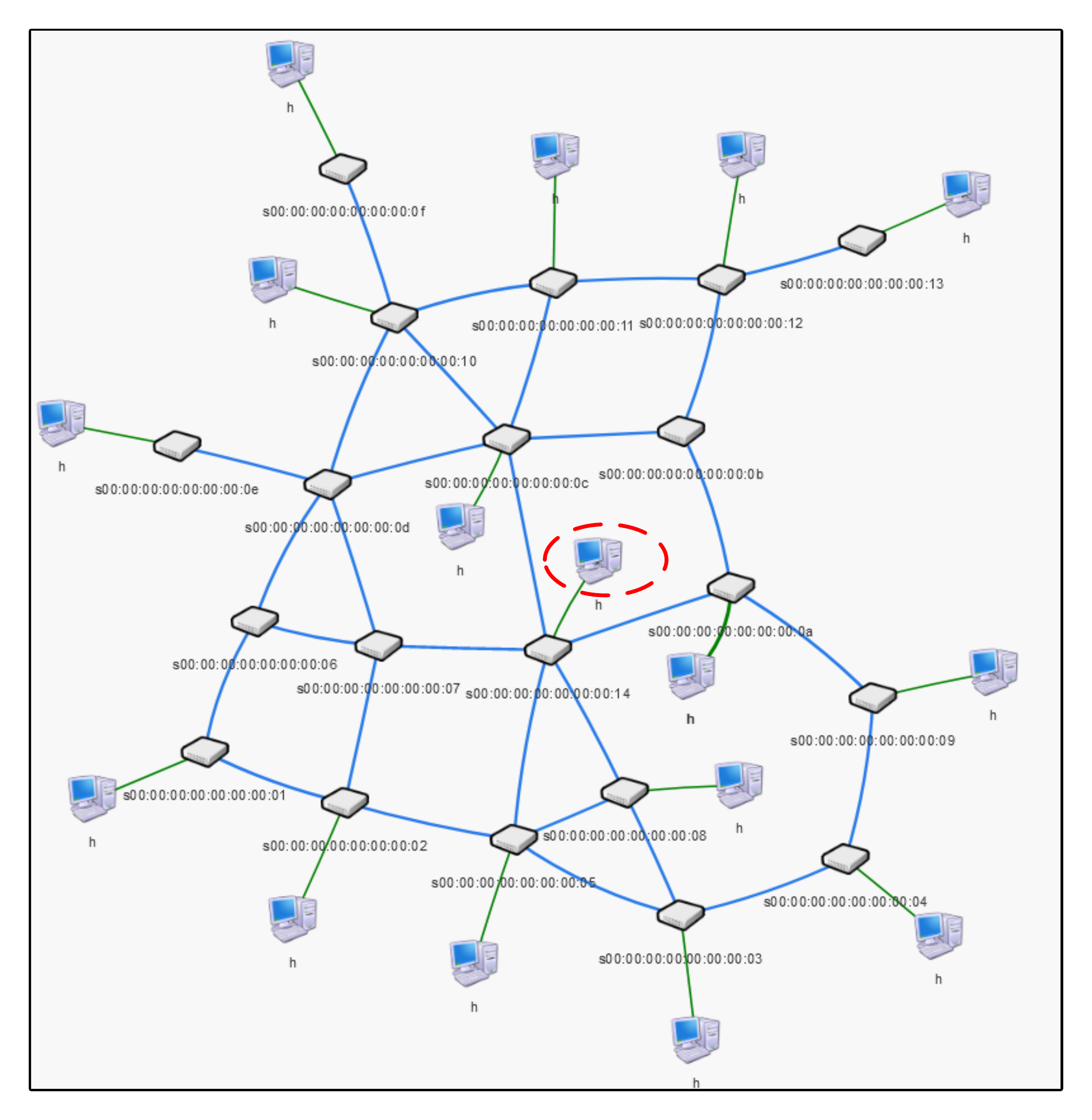}
\vspace{-1.8em}
 \caption{Experimental Network Topology}
\vspace{-1.8em}
\label{fig:topology}
\end{figure}



\subsubsection{Defender's Setup} \label{sec:defender}


In our experiments, we consider 3 different configurations for the defender. We name them as \textit{No Defense}, \textit{Periodic RRM}, and \textit{Strategic RRM}. 
\begin{itemize}
    \item \textit{No Defense}: To observe the worst-case attack scenario the network environment is implemented without any defense mechanism. In this case, the attacker 
    can easily gain the (static) routing information by doing reconnaissance attacks. Next, s/he attacks the target links which are found by using that static information.

    \item \textit{Periodic RRM}: The second defense setup is based on performing RRM periodically. The routes are changed periodically to the alternative ones that are pre-calculated. We choose different values of frequencies to run RRM in order to observe the impact.
    
    \item \textit{Strategic RRM}: It is our proposed signaling game-based defense mechanism. In this case, the defender takes actions based on the attacker's actions as explained in Section \ref{sec:strategymodel}. The mechanism is given in Algorithm \ref{Algo_Strategic} and is implemented accordingly. 
    
\end{itemize}

\subsubsection{Attacker's Setup} \label{sec:attacker}

In a Crossfire attack setup, a decoy server does not receive a large amount of traffic from one or multiple bots simultaneously so that the activity is not considered as suspicious at the network administrator's side. 
Thus, we limit each bot's communication to only 1 decoy server with a regular amount of traffic.
Each bot sends/receives approximately 5 Mbit of data per second in order to behave as legitimate. 


We implement two different attacker behaviors, namely \textit{aggressive} and \textit{stealthy}, with two different capabilities, such as \textit{decent} and \textit{strong}, as mentioned in Section~\ref{sec:crossfire}. 
The aggressive attacker's time for reconnaissance is fixed to 1 minute in experiments. The decent attacker has a limited number of bots, which is as much as the number of legitimate clients in the network. Meanwhile, the strong attacker is equipped with a doubled number of bots. Even though the attack impact is expected to increase, it should also be noted that the more bots the attacker employs, the  higher cost is incurred on him/her. We run our experiments on each case to show how effective our solution would be 
with different attack models.

\begin {table*}[ht]
\begin{center}
\caption {Delay Caused by RRM (in microsecond)} \label{tab:delayrrm} 
  \begin{tabular}{ | l |  c | c |  c |  c |  c | c |  c | }
    \hline
     \textbf{\textit{Client}} & \textbf{\textit{Delay }} & 
     \multicolumn{6}{|c|}{\textbf{\textit{Delay with RRM}}}  \\
     \cline{3-8}
     \textbf{\textit{ID}} & \textbf{\textit{Without  }} & 
     \multicolumn{3}{|c|}{\textbf{\textit{Optimum size Path}}} & 
     \multicolumn{3}{|c|}{\textbf{\textit{Any size Path}}}  \\
     \cline{3-8}
     \textbf{\textit{}} & \textbf{\textit{RRM}} &\textit{\textbf{60-seconds RRM}} & \textit{\textbf{30-seconds RRM}} & \textit{\textbf{10-seconds RRM}} & \textit{\textbf{60-seconds RRM}} & \textit{\textbf{30-seconds RRM}} & \textit{\textbf{10-seconds RRM}}\\
     \cline{1-2}
     \hline 
    1  & 2853 & 2892 & 3040 & 3398 & 3781 & 3717 & 4053 \\ \hline 
    2  & 2854 & 2898 & 3050 & 3406 & 3761 & 3616 & 4145 \\ \hline 
    3  & 1497 & 1533 & 1629 & 1882 & 1973 & 2075 & 2351  \\ \hline 
    4  & 1499 & 2161 & 2322 & 2698 & 2986 & 3205 & 3578  \\ \hline 
  \end{tabular}
\end{center}
\vspace{-2.5em}
\end{table*}

\vspace{-10pt}
\subsection{Evaluation Metrics}



The application of RRM introduces some overhead cost on the defender since it creates extra packets in the network. 
This overhead can cause Quality of Service (QoS) problems in a form of increased communication latency/packet delay or a higher number of packet losses for the legitimate users. 
We specifically consider the following 
metrics to be measured in our experiments:

\begin{itemize}
    \item \textit{Delay for using longer path:} Using a randomized alternative path can cause longer end-to-end delay since the route may not be the optimal/shortest anymore. This metric shows the increase in delay compared to the optimal path. 
    
    \item \textit{Delay for flow table updates:} Flow tables are updated whenever RRM is triggered. This process of updating flow table entries of switches takes time. This time is measured to show additional delays that legitimate clients are exposed to. 
    Related to this issue, some of the packets may drop if the queue of a switch becomes full and it cannot handle any more packets whenever the flow table is updated even though no attack is being taken place. 
    
    \item \textit{The number of packet losses:} 
    This metric represents the number of packets lost for legitimate users due to attacks or the execution of RRM. 
    
    
\end{itemize}

\vspace{-10pt}
\subsection{Experimental Results}

We run experiments on different network configurations to observe each evaluation metric separately. We first show the overhead of performing RRM 
on the network (i.e., legitimate users) varying the RRM frequency, i.e., the interval period between two subsequent route mutations.
We find the optimal period among them, and use this \textit{Periodic RRM} to compare  with the proposed \textit{Strategic RRM} by running experiments with different adversary models. 

\subsubsection{Observing RRM Costs without any Attacker}

We run our experiments without considering any malicious activities in the network in order to measure the defender's cost (i.e., the cost imposed on its clients). In the experiments, we measure the additional delay that is caused due to updating flow tables as well as using longer paths when RRM is applied. %
We first change the RRM frequency to observe the flow table update cost. Here, the routing paths are kept the same. 
Then, to assess the delay for using longer paths, we run experiments selecting optimal (shortest length) paths as well as non-optimal (alternative) paths. 

Table~\ref{tab:delayrrm} presents the average packet delay for 4 randomly selected legitimate clients. 
We can easily see that a higher frequency of RRM (i.e., when RRM interval period is 10 seconds) brings additional delay to the clients if alternative path sizes are the same as the optimum path. The slight difference between each column is caused by flow table updating. As an example, we can take a look at the delays for Client 1: during the 60-seconds period RRM, the average delay is 2892 milliseconds, while the delay becomes 3040 milliseconds and 3398 milliseconds when the periods are  30-seconds and 10-seconds, respectively. Even though the difference between them looks negligible (in milliseconds), it should be noted that the average delay is computed for approximately 100 thousands of packets in a 10-minute long experiment. Thus, the sum of additional delays that are caused by flow table updates, is a significant cost considering the delay increase for 100 thousands of packets where only a few times flow table updates are performed. 
%
We can also observe in Table \ref{tab:delayrrm} that if we use 
alternative (i.e., non-optimal) paths for clients, the average delay increases significantly. This is because the increased number of hops adds further delay to the packet communication. 

Next, we assess if the route updates can cause packet drops. 
The experimental setup remains the same as the previous ones except
we consider a higher number of clients to send more simultaneous packets in order to occupy the queues/buffers at the switches and simulate an environment that would more likely occur during Crossfire attacks. We use small sized packets (100 bytes each) so that the link bandwidth cannot be a cause of packet dropping. The results from the experiments are presented in Table \ref{tab:switchingcost}. We observe that a higher frequency of RRM causes a larger number of packet drops. This is reasonable considering switches are busier updating their flow tables in the cases where more often route changes are occurred. In other words, some of the packets cannot be handled on time and are dropped because of overflowing of the queue. 

\begin {table}[t]
\begin{center}
\caption {Packet Drop Caused by Flow Table Update}
\vspace{-0.5em} \label{tab:switchingcost} 
  \begin{tabular}{ | c |  c | c |}
    \hline
      10-seconds RRM & 30-seconds RRM & 60-seconds RRM \\ \hline
    10.1\%  &   8.6 \%  & 1.6\%  \\ \hline 
  \end{tabular}
\end{center}
\vspace{-2.5em}
\end {table}

The 60-seconds RRM case causes fewer delays and less packet loss compared to 30-seconds and 10-seconds RRM cases if there is no attack, as shown in Tables~\ref{tab:delayrrm} and~\ref{tab:switchingcost}. 
In other words, the 10-seconds RRM case causes lots of overhead compared to 60-seconds RRM. Thus, we opt to run 60-seconds RRM in the following experiments to compare with \textit{Strategic RRM}. 

\begin {table*}[t]
\begin{center}
\caption {Cost Metric for Different Attacker Model} \label{tab:defendercost} 
  \begin{tabular}{ | l |  c |  c | c | c | c | c |}
    \hline
     Attacker   &  \multicolumn{3}{|c|}{Packet Loss} & \multicolumn{3}{|c|}{Average Delay (in milliseconds) }\\
     \cline{2-7}
      Model &  No Defense & Periodic RRM & Strategic RRM & No Defense & Periodic RRM & Strategic RRM \\ \hline
    Strong-Aggressive  &   31\% & 22\% & 16\% & 1172 & 989 & 438  \\ \hline 
    Strong-Stealthy   &  15\%  & 7\%  & 10\% & 579 & 380 & 278 \\ \hline 
    Decent-Aggressive   &   7\% & 2\% & 2\%  & 464   & 251 & 200 \\ \hline 
    Decent-Stealthy  &  3\%  & 0.7\% & 1\% & 263 & 168 & 141 \\ \hline 
  \end{tabular}
\end{center}
\vspace{-1.5em}
\end {table*}



\vspace{-0.5em}
\begin{figure*}[htb]
\setlength{\abovecaptionskip}{0.2cm}
\setlength{\belowcaptionskip}{0cm}
\centering 
\subfloat[Aggressive Behavior Model \label{fig:powerful-agg}]{\includegraphics[width=3.5in,height =2in]{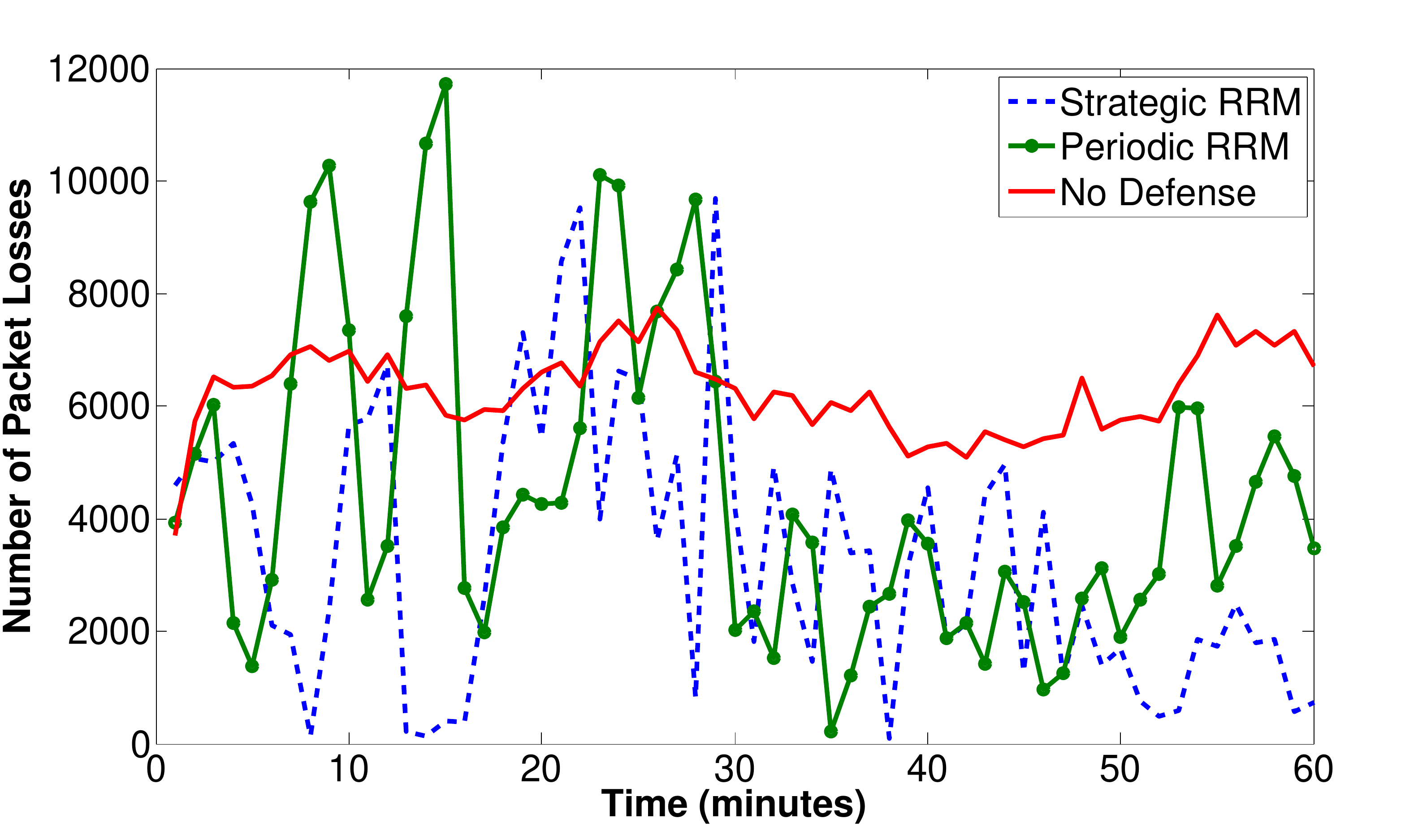}}
\subfloat[Stealthy Behavior Model \label{fig:powerful-ste}]{\includegraphics[width = 3.5in, height = 2in]{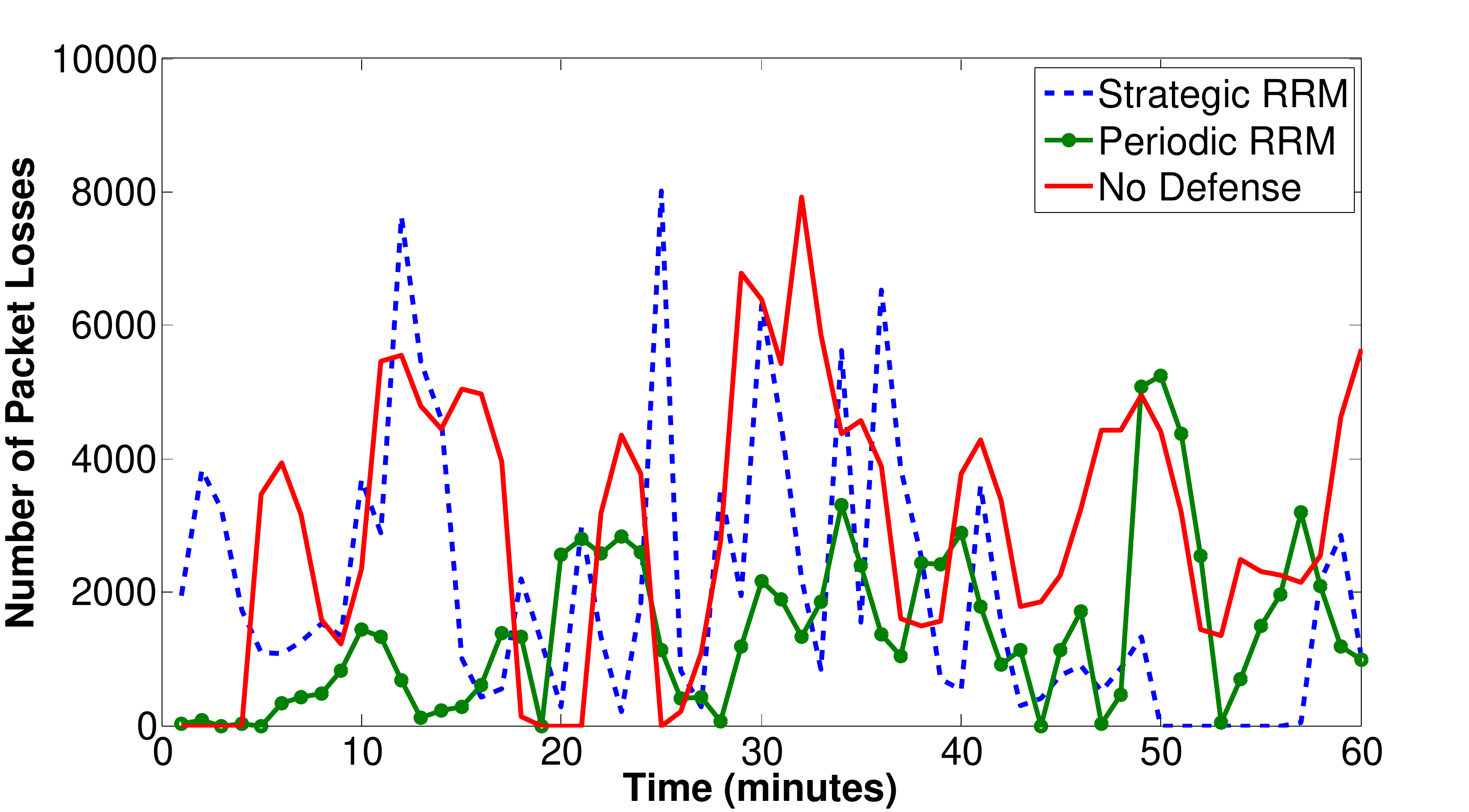}}
\vspace{-0.5em}
\caption{Packet Loss with Strong Attacker}
\label{fig:powerful}
\vspace{-2.5em}
\end{figure*}

\begin{figure*}[htb]
\setlength{\abovecaptionskip}{0.2cm}
\setlength{\belowcaptionskip}{0cm}
\centering 
\subfloat[Aggressive Behavior Model \label{fig:decent-agg}]{\includegraphics[width=3.5in,height =2in]{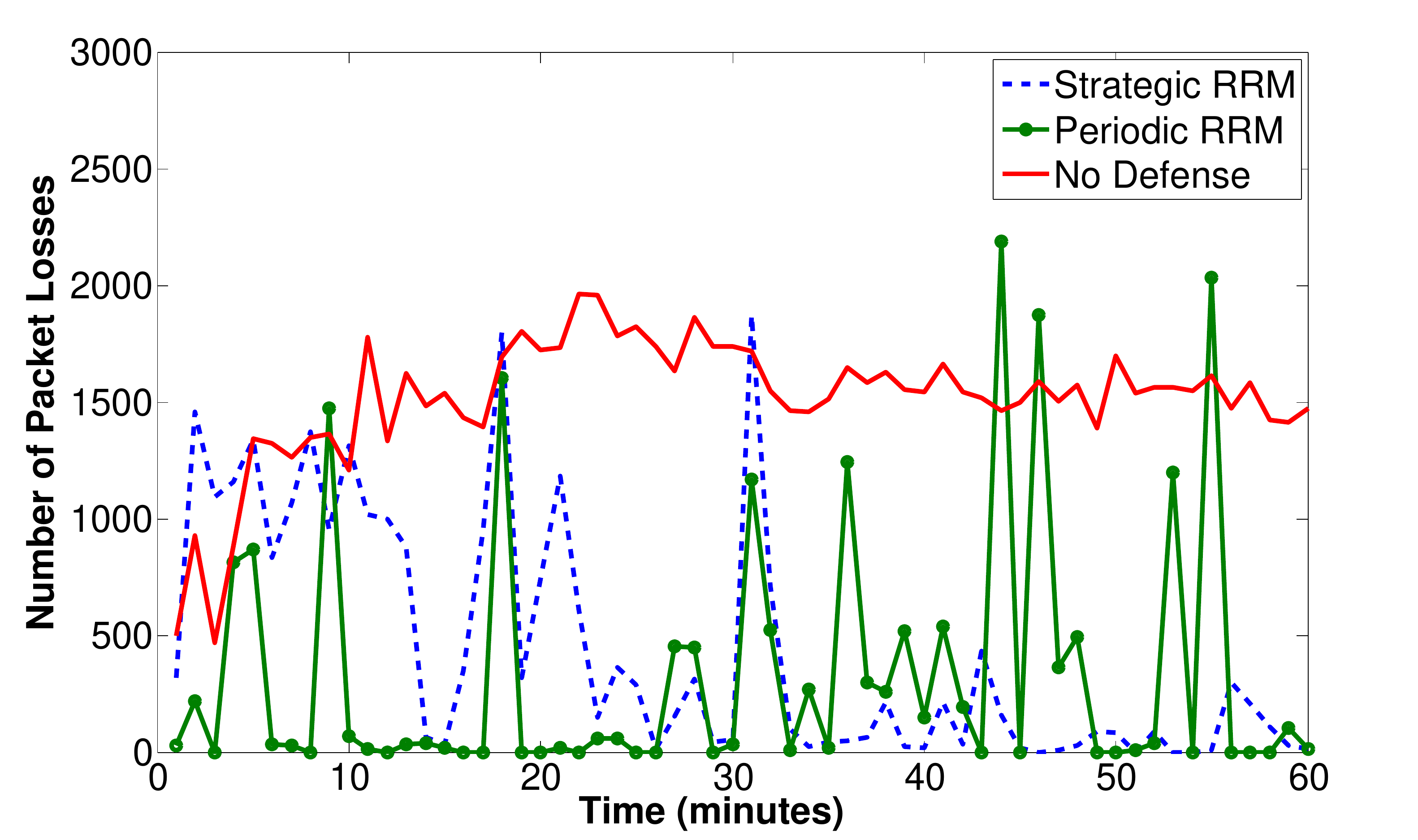}}
\subfloat[Stealthy Behavior Model \label{fig:decent-ste}]{\includegraphics[width = 3.5in, height = 2in]{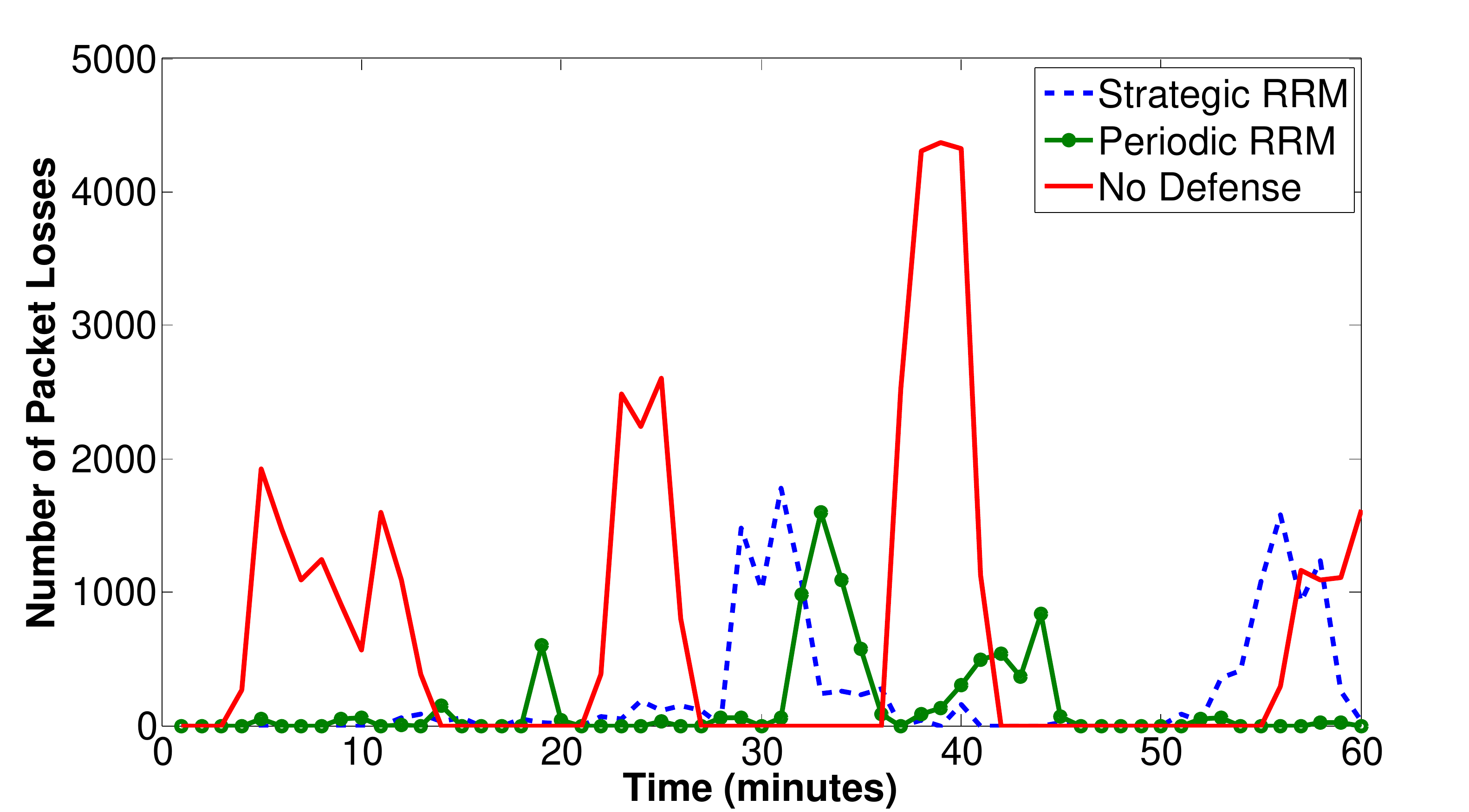}}
\vspace{-0.5em}
\caption{Packet Loss with Decent Attacker}
\label{fig:decent}
\vspace{-1.0em}
\end{figure*}


\subsubsection{Comparing Strategic RRM with Periodic RRM}

The values for different game parameters of \textit{Strategic RRM} are chosen as shown in Table~\ref{tab:parameters}. These parameters are observed and tuned based on our preliminary experiments on the RRM cost and the impact of attacks. More specifically, $\alpha$ is the attacker's gain, and it is considered as the percentage of packet losses due to the attack (when there is \textit{No Defense} mechanism), which is 3\% where the attacker is stealthy and decent, as shown in Table~\ref{tab:defendercost}. 
This value increases if the attacker's strategy is more intense or network size is smaller. 
Parameter $\lambda$ is the gain of the defender that s/he earns by defending against the reconnaissance attack. It is calculated by deducting 0.7\% packet loss when the 60-seconds \textit{Periodic RRM} is applied from 3\% packet loss when there is \textit{No Defense} mechanism. Thus, $\lambda$ is set at 2.3. 
$\lambda$ increases if RRM is applied more often than 60-seconds. 
The parameter $\delta$ is the cost of the defender with respect to the legitimate users. We derive this parameter as the percentage of increase in the average packet delay from the case of  \textit{No Defense} (a delay of 2175 microseconds) to the case of 60-seconds RRM (a delay of 2371 microseconds). Hence, we calculate $\delta$ as 9. 
The value of $\delta$ would increase if RRM is applied more often. 
In addition, we assign 1.6 to $c$ since it is the packet loss (as shown in Table~\ref{tab:switchingcost}) in the case of 60-seconds RRM.  
The cost variable, $c$, increases if RRM is more frequent.
Finally, we consider the minimum possible interval of running RRM as 1 second. Hence, the frequency ($f$) is set to 1.

\begin {table}[t]
\begin{center}
\caption {Parameters Used in Simulation} \label{tab:parameters} 
  \begin{tabular}{  | c  || c |  c | c | c | c |}
    \hline
      Parameter Name & $\alpha$ & $\lambda$ & $\delta$ & $c$ & $f$\\ \hline
    Value & 3  &   2.3  & 9 & 1.6 & 1  \\ \hline 
  \end{tabular}
\end{center}
\vspace{-2.5em}
\end {table}

In \textit{Strategic RRM}, the defender builds its belief about each of its clients of being a bot and perform RRM accordingly as it is mentioned in Section \ref{Sec:Belief}. In other words, the attacker is able to run a stealthy attack initially until the belief increases to some extent. Therefore, it is expected to see more packet losses for \textit{Strategic RRM} at the beginning of the experiments. 
Meanwhile, it is also expected that \textit{Periodic RRM} causes a similar number of packet losses since it has the same strategy throughout the experiments. 




In the next experiments, we implement a different attack models for each defender type and run them separately. 
%
In order to observe the performance in different attack models, we plot the graphs individually in Fig.~\ref{fig:powerful} and Fig.~\ref{fig:decent}.
Packet loss percentage and average delay for all network nodes are given in Table~\ref{tab:defendercost}.
The attacker model demonstrated in Fig.~\ref{fig:powerful-agg} 
considers a strong one with an aggressive behavior. As the Fig.~\ref{fig:powerful-agg} shows, in the case of \textit{No Defense}, the number of packet losses are high and it stays almost similar throughout the experiments. 
However, in the cases of \textit{Periodic RRM} and \textit{Strategic RRM} mechanisms, the scenario changes dramatically. \textit{Strategic RRM} has a much less number of packet losses after the first few minutes of the experiment. This initial period takes the belief to a state at which RRM starts to perform properly against the possible bots. An important observation is that \textit{Periodic RRM} usually has a higher number of packet losses continuously since it changes the routes at every time interval of the period which may not overlap with the attacker's reconnaissance phase. However, this is not the case for \textit{Strategic RRM}. It defends against the attack only when there are highly suspicious activities, which is reflected by the increased belief and the equilibrium conditions.  Another interesting observation in Fig. \ref{fig:powerful-agg} is that the peak number of packet losses appears at the beginning of the RRM cases, not in the \textit{No Defense} case.  
This can be explained as the nature of RRM. Since it is based on random mutation, it may cause more routes to the same link which can end up congesting that specific link intensively. However, in \textit{No Defense} case, the attacker can only target the same link set unless new bots are added and new routing paths are identified. In this attack model, the overall network's packet loss percentages are 31\% for \textit{No Defense}, 22\% for \textit{Periodic RRM}, and 16\% for \textit{Strategic RRM} as presented in Table \ref{tab:defendercost}. Average transmission delays are parallel to the packet loss results and decrease from 1172 milliseconds at \textit{No Defense} case to 989 milliseconds \textit{Periodic RRM} case, and to 438 milliseconds in \textit{Strategic RRM} case.

In addition to the attacker with an aggressive behavior, we also consider a stealthy one and represent the results in Fig. \ref{fig:powerful-ste}. In this graph, we can see unbalanced results which are basically due to the attacker's dynamic behavior. The stealthy attacker model can stay idle some of the time and can do longer reconnaissance attacks. Hence, it leads to such uneven consequences. Unlike the aggressive attack model, \textit{Strategic RRM} has higher packet losses (i.e., 10\%) compared to \textit{Periodic RRM} case's packet loss (i.e., 7\%), even after the belief gets time to be matured as shown in Table \ref{tab:defendercost}. The main reason behind this behavior is the characteristic of the stealthy attacker model where the attacker reduces and increases the attack intensity randomly making the defender's belief change more often. Even though \textit{Strategic RRM} has higher packet loss, the average transmission delay is 278 milliseconds for the \textit{Strategic RRM} case while it is 380 milliseconds in \textit{Periodic RRM}. Moreover, it should be noted that \textit{Periodic RRM} brings additional overhead to the network even if there is not an attack targeting the network. Hence, we claim that it is reasonable to use \textit{Strategic RRM} considering the overall advantages.   


Furthermore, we run the same experiments with a decent attacker. As we defined earlier, the number of bots he has is half of the bots that the aggressive attacker has. The results are shown in Fig.~\ref{fig:decent}, which are correlated with the ones in Fig. \ref{fig:powerful}. One significant observation is that the numbers of packet losses are decreased by more than half. In other words, increasing the number of bots, as in the aggressive case, raises the damage significantly, more than a linear increase. It is easily noticed that the number of packet losses goes down to zero in the case of the decent attacker as shown in Fig. \ref{fig:decent-agg} and Fig. \ref{fig:decent-ste}. This is because there are fewer bots to be protected against. The other result to pay attention is that there is a more high number of packet losses consecutively in \textit{Strategic RRM} compared to a strong attacker since less number of bots create less suspicion compared to more bots as it is shown in Equation~(\ref{Eq:belief}). This interesting outcome can be seen in the comparison of Fig. \ref{fig:powerful-ste} and Fig. \ref{fig:decent-ste}. In addition to that, \textit{Strategic RRM} has almost the same overall contribution (e.g., 2\% for aggressive, and 1\% for stealthy) to the network defense compared to \textit{Periodic RRM} (e.g., 2\% for aggressive, and 0.7\% for stealthy) as reported in Table \ref{tab:defendercost}. Yet, there is still a notable transmission delay decrease from 251 milliseconds to 200 milliseconds with Aggressive Attacker, and from 168 milliseconds to 141 milliseconds with Stealthy Attacker whenever \textit{Strategic RRM} is used. 



\section{Related Work}
In this section, we talk about other's works in a similar area. We first explain research papers proposing defense mechanisms against SLFAs, and later we discuss signaling games paper for cybersecurity.

\vspace{-10pt}
\subsection{Defense Mechanisms for SLFA} 
There are various defense mechanisms proposed for the SLFA in literature. Only a few of them consider defending against reconnaissance phase of the attack \cite{hirayama2015fast} \cite{meier2018nethide} while most of them strive to protect the network during the data phase of the attack \cite{gkounis2016interplay} \cite{xue2018linkscope} \cite{wang2016towards} \cite{lee2013codef} \cite{wang2017linkbait} \cite{wang2018detecting}. 
Authors in \cite{hirayama2015fast} claim traceroute packets are increased before an SLFA occurs and they design a detection mechanism for the attacker based on that assumption. Even though we also have a similar belief that traceroute packets will increase before the attack, we do not rely on this feature individually and our attack mitigation technique is different. Meier et al. obfuscates the attacker's reconnaissance by replying with some virtual hops that is consistent with the network topology \cite{meier2018nethide}. While this solution aims to prevent the attacker to gain critical information, it gives wrong information to legitimate users. We argue that this approach is not the best practice since a legitimate user needs to get the right information in order to utilize the network facilities in the most efficient way. Otherwise, he might use longer path, and suffer higher transmission delays.

Meanwhile, authors in \cite{gkounis2016interplay} propose Traffic Engineering (TE) as a solution to mitigate an SLFA 
. In their solution, the defender forces the attacker to use the improbable path (i.e., very unlikely to be used) so that the attacker ends up being identified. Similarly, observing traffic patterns while attack happens is applied to detect and defer an SLFA in \cite{xue2018linkscope}. The main problem with these solutions is that they are reactive and some critical harm could have been done before the attack is mitigated. Traffic engineering based solution is also used by authors in \cite{wang2016towards}. The authors suggest upgrading switches to SDN-based switches in order to detect and mitigate an SLFA. However, it is not specified how to upgrade switches in run-time and how SDN switches are capable of detecting such attacks. In \cite{lee2013codef}, even though collaboration between ASes is shown helpful whenever an SLFA hits, it is not clear how to manage different ASes to work together. \cite{wang2017linkbait} observes link-probers by checking the packets at the ingress port. If a sender is found as link-prober, then Linkbait applies MTD to confuse its route. They suggest matrix-based feature extraction in order to detect which link-prober is a bot. In \cite{wang2018detecting}, authors propose utilizing SDN capabilities to detect the attacker's activities and block the malicious activity. While our solution is similar to \cite{wang2017linkbait} and \cite{wang2018detecting}, we do not rely on identifying bots which could not be possible with an intelligent adversary model. Differently, 
we have designed a signaling game which lets the defender to decide which action to take considering his/her payoff. 

\vspace{-10pt}
\subsection{Signaling Games for Cybersecurity} 

In many cybersecurity scenarios, the defender cannot clearly detect whether the received messages are coming from a benign user or they are a part of the attack scenario and are initiated by attackers. Signaling game is a two-player incomplete information game that has been used to model different cybersecurity problems, such as intrusion detection \cite{shen2011signaling} or deception \cite{zhuang2010modeling}. 
Furthermore, the authors model cybersecurity in terms of signaling games in \cite{casey2014cyber} and W. Casey et al. model deception with signaling game in~\cite{casey2018deception}. They present how signaling games provide a formal mathematical method to analyze the way of identity and deception coupling in cyber-social systems. The game-theoretic framework can be extended to reason about dynamical system properties and behavior traces. In \cite{mohammadi2016game}, the authors formulate a deception 
with signaling game in networks in which the defender deploys a fake avatar for identification of the compromised internal user. In~\cite{rahman2013game},  the authors propose a selective and dynamic mechanism for counter-fingerprinting. They model and analyze the interaction between a fingerprinter and a target as a signaling game. Following this work in \cite{zhao2017sdn}, the authors suggest changing attack surface (e.g., port numbers) depending on a belief that is observed in the signaling game. In~\cite{moghaddam2015trust}, the authors investigate the interactions between a service provider and a client by signaling game, where the client does not have complete information about the security conditions of the service provider. In~\cite{adili2017cost}, the authors propose a moving target-based deceptive defense mechanism using a signaling game for the frequency of migrations of the virtual machines in clouds. While we also apply signaling game for network defense, we propose using RRM which has not been studied under a signaling game. Furthermore, our framework is applied not only to reconnaissance attacks but also real data phase of the attacks. 
The reason behind choosing and applying the signaling game approach specifically for SLFA is that we believe the dynamic and unnoticeable behavior of SLFA can be modeled into a signaling game accurately. 



\section{Conclusion}
\label{sec:conclusion}

In this paper, we present a signaling game-based dynamic MTD to defend against Crossfire attacks. We first model the attacker and the defender as a signaling game. Considering their payoffs, we compute the equilibria of the game, which represent the best strategies for each player considering the opponent is rational. According to the game results, we develop an algorithm, namely \textit{Strategic RRM}. We implement and compare it with \textit{Periodic RRM}. Our experimental results show that \textit{Strategic RRM} can lessen the impact of Crossfire attacks similar to \textit{Periodic RRM}, while it brings significantly less overhead. 
As the future work, we 
would like to extend our strategic MTD-based framework for other DDoS-based emerging attacks.

\vspace{-10pt}
\balance

\endgroup
\bibliographystyle{IEEEtran}
\bibliography{master.bbl}

\vspace{-10pt}
\begin{IEEEbiography}[{\includegraphics[width=1in,height=1.25in,clip,keepaspectratio]{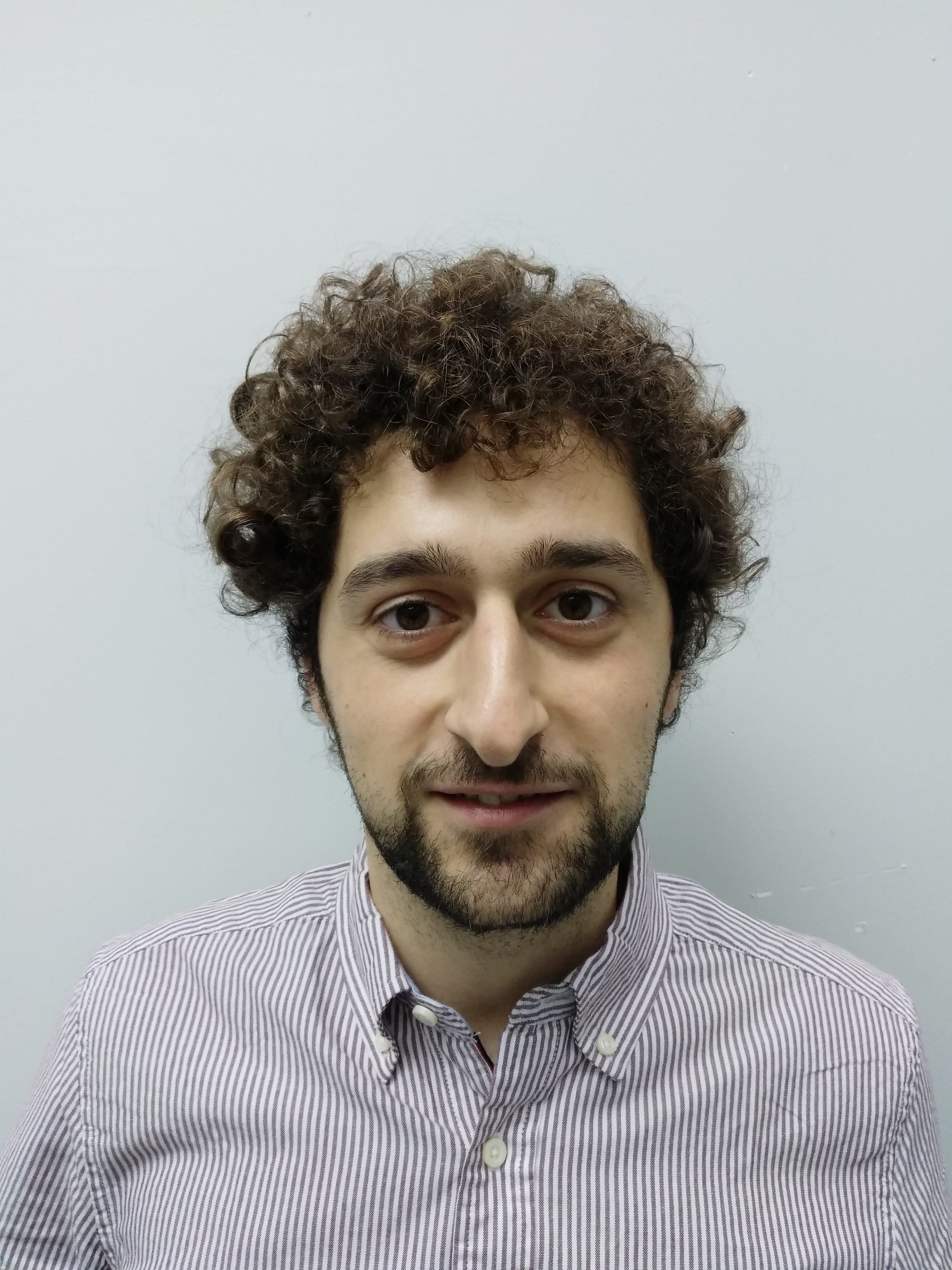}}]{Abdullah Aydeger}
is a Ph.D. candidate in Electrical and Computer Engineering in Florida International University, Miami, USA. He received his M.S. degree from Department of Computer Engineering at Florida International University in 2016 and his B.S. degree in Computer Engineering from Istanbul Technical University in 2013. Prior to his master degree, he did his internship at A*STAR research institute in Singapore. His research interests include SDN, network security, and resiliency.

\end{IEEEbiography}

\begin{IEEEbiography}[{\includegraphics[width=1in,height=1.25in,clip,keepaspectratio]{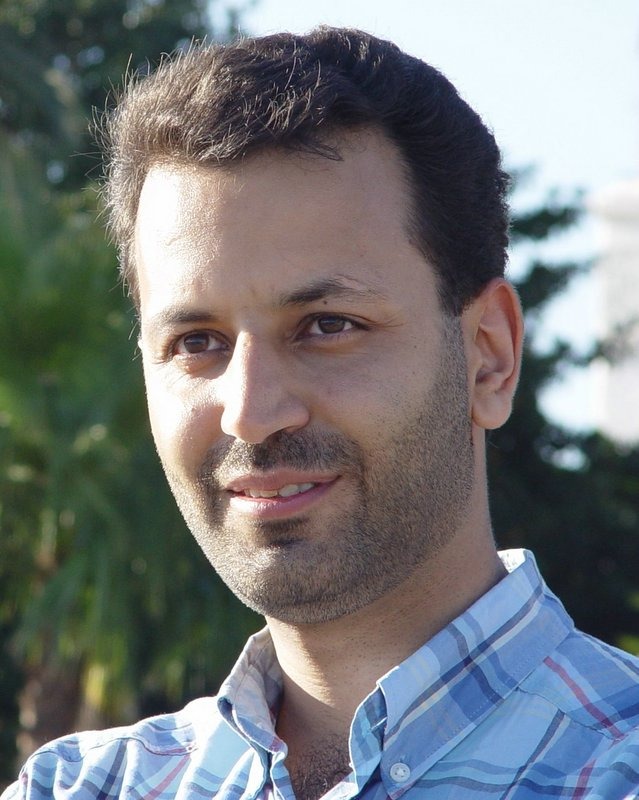}}]{Mohammad Hossein Manshaei} received the B.Sc. degree in electrical engineering and the M.Sc. degree in communication engineering from the Isfahan University of Technology, Iran, in 1997 and 2000, respectively, and the M.Sc. degree in computer science and the Ph.D. degree in computer science and distributed systems from the University of Nice, Sophia-Antipolis, France, in 2002 and 2005, respectively. He did his thesis work at INRIA, Sophia-Antipolis. From 2006 to
2011, he was a Senior Researcher and Lecturer with the Swiss Federal Institute of Technology, Lausanne (EPFL). He held visiting positions at UNCC, NYU, VTech, and UTSA. He is currently a visiting faculty at the Florida International University and an Associate Professor with the Isfahan University of Technology. His research interests include wireless networking, network security and privacy, computational biology, and game theory.
\end{IEEEbiography}

\begin{IEEEbiography}[{\includegraphics[width=1in,height=1.25in,clip,keepaspectratio]{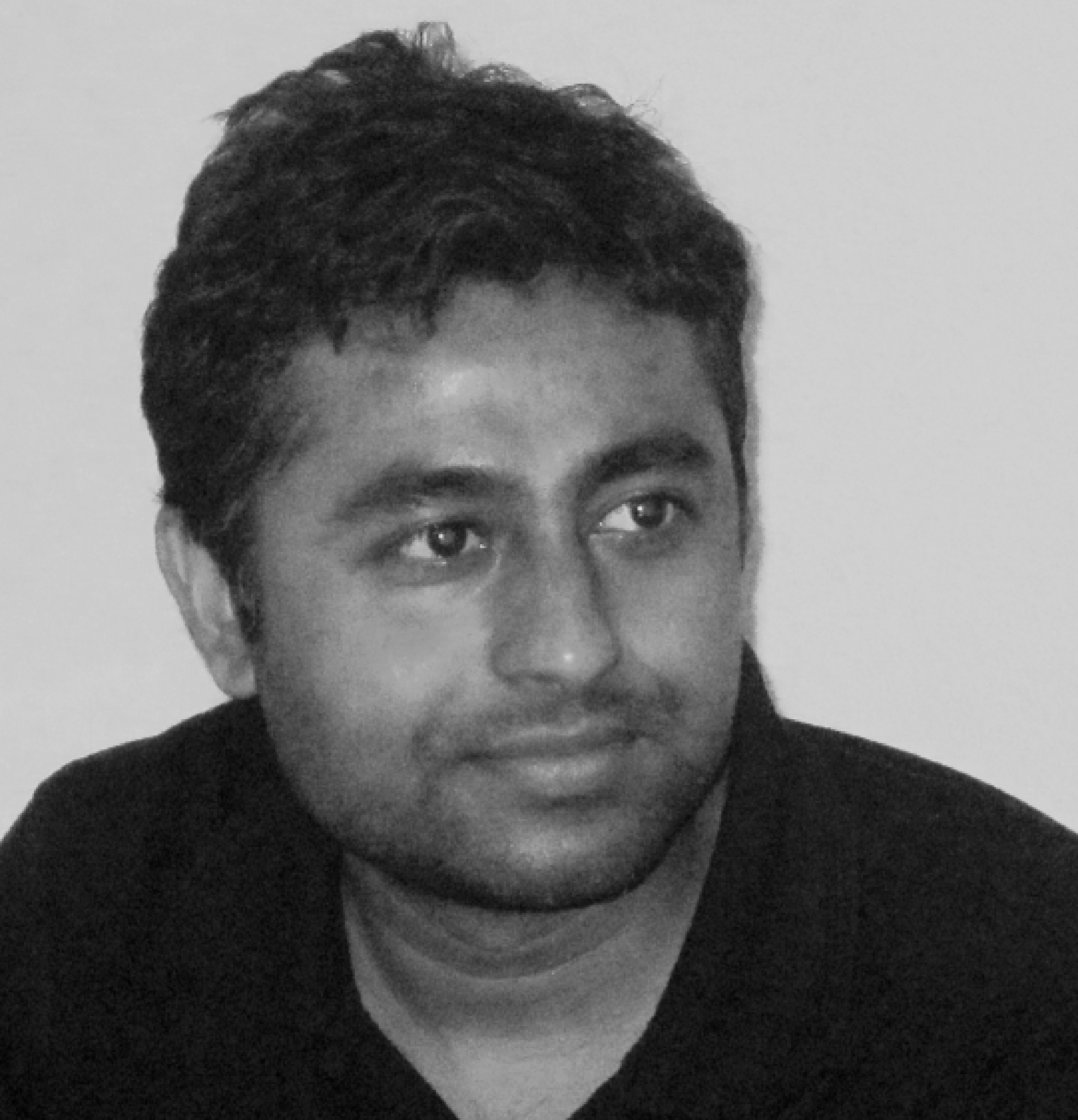}}]{Mohammad Ashiqur Rahman}
is an Assistant Professor in the Department of Electrical and Computer Engineering at Florida International University (FIU), USA. Before joining FIU, he was an Assistant Professor at Tennessee Tech University. He received the BS and MS degrees in computer science and engineering from Bangladesh University of Engineering and Technology (BUET), Dhaka, in 2004 and 2007, respectively, and obtained the PhD degree in computing and information systems from the University of North Carolina at Charlotte in 2015. Rahman's primary research interest covers a wide area of computer networks and communications, within both cyber and cyber-physical systems (CPS). His research focus primarily includes network and information security and CPS dependability. He has already published over 50 peer-reviewed journals and conference papers. He has also served as a member in the technical programs and organization committees for various IEEE and ACM conferences.
\end{IEEEbiography}

\begin{IEEEbiography}[{\includegraphics[width=1in,height=1.25in,clip,keepaspectratio]{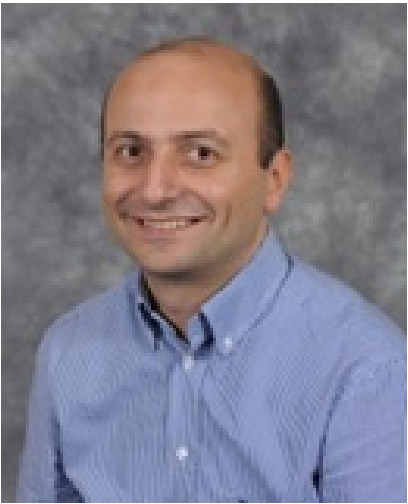}}]{Kemal Akkaya}
is a full professor in the Department of Electrical and Computer Engineering at Florida International University. He received his PhD in Computer Science from University of Maryland Baltimore County in 2005 and joined the department of Computer Science at Southern Illinois University (SIU) as an assistant professor. Dr. Akkaya was an associate professor at SIU from 2011 to 2014. He was also a visiting professor at The George Washington University in Fall 2013. Dr. Akkaya leads the Advanced Wireless and Security Lab (ADWISE) in the ECE Dept. His current research interests include security and privacy, security and privacy, internet-of-things, and cyber-physical systems. Dr. Akkaya is a senior member of IEEE. He is the area editor of Elsevier Ad Hoc Network Journal and serves on the editorial board of IEEE Communication Surveys and Tutorials. He has received ``Top Cited'' article award from Elsevier in 2010.
\end{IEEEbiography}

\end{document}